\newtheorem{theorem}{Theorem}
\newcommand*{\e}{\mathop{}\!\mathrm{e}}
\newcommand*{\jj}{\mathop{}\!\mathrm{j}}
\newcommand{\figwidth}{7.2}
\begin{document}
\title{\huge Multiuser Communications Aided by Cross-Linked Movable Antenna Array: Architecture and Optimization}
\author{Lipeng Zhu, ~\IEEEmembership{Member,~IEEE,}
		He Sun, ~\IEEEmembership{Member,~IEEE,}
		Wenyan Ma,~\IEEEmembership{Graduate Student Member,~IEEE,}
		Zhenyu Xiao,~\IEEEmembership{Senior Member,~IEEE,}
		and Rui Zhang,~\IEEEmembership{Fellow,~IEEE}
%	\thanks{This work is supported in part by Ministry of Education, Singapore under Award T2EP50120-0024, Advanced Research and Technology Innovation Centre (ARTIC) of National University of Singapore under Research Grant R-261-518-005-720, and The Guangdong Provincial Key Laboratory of Big Data Computing.}
	\vspace{-0.5 cm}
	\thanks{L. Zhu, H. Sun, and W. Ma are with the Department of Electrical and Computer Engineering, National University of Singapore, Singapore 117583 (e-mail: zhulp@nus.edu.sg, sunele@nus.edu.sg, wenyan@u.nus.edu).}
	\thanks{Z. Xiao is with the School of Electronic and Information Engineering, Beihang University, Beijing, China 100191 (e-mail: xiaozy@buaa.edu.cn).}
	\thanks{R. Zhang is with School of Science and Engineering, Shenzhen Research Institute of Big Data, The Chinese University of Hong Kong, Shenzhen, Guangdong 518172, China (e-mail: rzhang@cuhk.edu.cn). He is also with the Department of Electrical and Computer Engineering, National University of Singapore, Singapore 117583 (e-mail: elezhang@nus.edu.sg).}
}

\maketitle

% As a general rule, do not put math, special symbols or citations
% in the abstract or keywords.

\begin{abstract}	
	Movable antenna (MA) has been regarded as a promising technology to enhance wireless communication performance by enabling flexible antenna movement. However, the hardware cost of conventional MA systems scales with the number of movable elements due to the need for independently controllable driving components. To reduce hardware cost, we propose in this paper a novel architecture named cross-linked MA (CL-MA) array, which enables the collective movement of multiple antennas in both horizontal and vertical directions. To evaluate the performance benefits of the CL-MA array, we consider an uplink multiuser communication scenario. Specifically, we aim to minimize the total transmit power while satisfying a given minimum rate requirement for each user by jointly optimizing the horizontal and vertical antenna position vectors (APVs), the receive combining at the base station (BS), and the transmit power of users. A globally lower bound on the total transmit power is derived, with closed-form solutions for the APVs obtained under the condition of a single channel path for each user. For the more general case of multiple channel paths, we develop a low-complexity algorithm based on discrete antenna position optimization. Additionally, to further reduce antenna movement overhead, a statistical channel-based antenna position optimization approach is proposed, allowing for unchanged APVs over a long time period. Simulation results demonstrate that the proposed CL-MA schemes significantly outperform conventional fixed-position antenna (FPA) systems and closely approach the theoretical lower bound on the total transmit power. Compared to the instantaneous channel-based CL-MA optimization, the statistical channel-based approach incurs a slight performance loss but achieves significantly lower movement overhead, making it an appealing solution for practical wireless systems.	
\end{abstract}
%\vspace{-0.3 cm}
% Note that keywords are not normally used for peerreview papers.
\begin{IEEEkeywords}
	Movable antenna (MA), cross-linked movable antenna (CL-MA) array, antenna position optimization, movement overhead, statistical channel knowledge.
\end{IEEEkeywords}

% For peer review papers, you can put extra information on the cover
% page as needed:
% \ifCLASSOPTIONpeerreview
% \begin{center} \bfseries EDICS Category: 3-BBND \end{center}
% \fi
%
% For peerreview papers, this IEEEtran command inserts a page break and
% creates the second title. It will be ignored for other modes.
\IEEEpeerreviewmaketitle

%\vspace{-0.3 cm}
\section{Introduction}
\IEEEPARstart{W}{ith} the rapid growth in communication demands with more stringent requirements, movable antenna (MA), also known as fluid antenna system (FAS), has emerged as a promising technology for next-generation wireless communication systems \cite{zhu2024historical,wong2022bruce,zhu2023MAMag}. Different from conventional fixed-position antennas (FPAs), MAs enable the flexible movement of antennas within designated transmitter or receiver regions, allowing for the full exploitation of the spatial variations in wireless channels to enhance communication performance \cite{zhu2023MAMag,zhu2022MAmodel,ma2022MAmimo}. With this reconfigurable mechanism to adapt to various signal propagation environments, MA systems can effectively increase the spectral efficiency of future wireless networks by making full use of the given number of antennas and radio frequency (RF) chains, without the need for bearing excessively high antenna and RF costs. Similar concepts of adjustable antenna positioning have also been explored in other reconfigurable antenna systems, such as fluid antennas \cite{Wong2021fluid,wong2020limit,New2024fluid,wu2024fluidMag}, flexible antennas \cite{yang2024flexible,zheng2024flexible}, and pinching antennas \cite{ding2024pinching,ouyang2025pinching}.

The advantages of MAs/FAS over conventional FPAs have been validated in various wireless systems \cite{zhu2025tutorial, shao2025tutorial}. In \cite{Wong2021fluid}, the spatial-correlation channel model was adopted in FAS to characterize the channel responses over discrete antenna ports at the receiver under rich-scattering environments. In \cite{zhu2022MAmodel}, a field-response channel model tailored for MA systems was proposed, characterizing the wireless channel as a continuous function of the MA positions at both the transmitter and receiver. The two channel models facilitate performance analysis and antenna position optimization in MA/FAS-aided wireless communication systems. The performance gains of MA/FAS-aided single-input single-output (SISO) systems have been demonstrated in both narrowband and wideband scenarios \cite{zhu2022MAmodel, Wong2021fluid,wong2020limit, zhu2024wideband}. It was shown that by reconfiguring antenna positions, the complex coefficients of multiple channel paths can be constructively combined to significantly enhance the received signal-to-noise ratio (SNR) \cite{zhu2025tutorial, shao2025tutorial}. Such an advantage in boosting received signal power can also be observed in MA-aided multiple-input single-output (MISO) systems via joint position optimization for multiple MAs \cite{mei2024movable}.

In addition to spatial diversity gains, MAs/FAS can also significantly improve spatial multiplexing performance of multiple-input multiple-output (MIMO) systems \cite{zhu2025tutorial, shao2025tutorial, ma2022MAmimo,chen2023joint,New2024MIMOFAS}. Specifically, the optimization of MAs' positions at the transmitter/receiver can effectively reshape MIMO channel matrices, the singular values of which can be flexibly balanced to improve MIMO capacity under different SNR conditions. In the context of multiuser communication, the adoption of MAs/FAS can significantly decrease the channel correlation between multiple users such that their interference can be more easily mitigated via transmit precoding or receive combining \cite{zhu2023MAmultiuser,xiao2023multiuser,wu2024globallyMA,Feng2024MAweighted,Xu2024FASmultiple}. Moreover, it was demonstrated that the two-timescale design of MA-aided multiuser communication systems can still achieve considerable performance improvements compared to FPA systems \cite{Hu20242024twotimeMA,zheng2024twotimeMA}, where the MA positions are optimized based on statistical channels to reduce antenna movement overhead. Even for pure line-of-sight (LoS) channels, an MA array can leverage the additional degrees of freedom (DoFs) in antenna position optimization to realize flexible beamforming, yielding superior performance in terms of interference nulling \cite{zhu2023MAarray,Hu2024MAarrayleak}, multi-beam forming \cite{ma2024multi,Kang2024DeepMA}, flexible beam coverage \cite{zhu2024dynamic,wang2024flexible}, etc. Recently, the six-dimensional MA (6DMA) system was proposed to fully exploit the spatial DoFs in three-dimensional (3D) position and 3D rotation/orientation \cite{shao20246DMA,shao2024discrete}. The advantages of both centralized and distributed 6DMAs in improving multiuser communication performance have been validated \cite{shao2024discrete,shao20246DMANet,shi20246DMAcellfree,pi20246DMAcoordi}. As a simplified implementation of 6DMA, rotatable antennas have also been studied to reduce hardware cost \cite{wu2024Rotatable,zheng2025rotatable}. Due to their great potentials for performance enhancement, MAs/6DMAs have been applied to various wireless systems in the literature, such as integrated sensing and communication \cite{ma2024MAsensing,lyu2025movableISAC,li2024MAISACMag,ma2025MAISAC,khalili2024advanced,Wu2024MARISISAC}, mobile edge computing \cite{ChenPC_MA_WPT_MEC,xiu2024delayMAMEC,xiu2024latencyMAMEC}, physical-layer security \cite{hu2024secure,tang2024secure,Ding2024MAsecure}, wireless power transfer \cite{Xiao2024MApower}, cognitive radio \cite{wei2024joint,zhou2024MASymbiotic}, satellite communications \cite{zhu2024dynamic,lin2024power}, unmanned aerial vehicle communications \cite{kuang2024movableISAC,liu2024uav,ren20246DMAUAV}, terahertz communications \cite{wang2024movable,zhu2024suppressing}, among others.

There are various implementation methods to realize antenna movement, such as motor-based MAs \cite{zhu2023MAMag}, micro-electromechanical system (MEMS)-based MAs \cite{balanis2008mems}, and liquid-based MAs \cite{paracha2019liquid}. In general, the motor-based MAs can achieve long-range movement, which are more suitable to be installed on infrastructures such as large-scale machines and base stations (BS), e.g., downtilt antennas. In comparison, MEMS-based MAs are featured by their fast response speed and high positioning accuracy, which are promising to be employed at small-scale devices. In addition, the element-level and array-level movement architectures were introduced in \cite{shao20246DMA,ning2024movable} to balance different trade-offs between antenna movement flexibility and hardware complexity. However, for both architectures, the movable unit requires an independent driving component for positioning reconfiguration over each dimension \cite{ning2024movable}, which results in the hardware complexity scaling with the total number of movable elements/subarrays. In particular for two-dimensional (2D) arrays on a surface, multiple movable elements/subarrays can usually conduct local movement to avoid the overlap between different driving components in practical implementation \cite{chen2023joint,zhang2024MAhybrid,ning2024movable}, which limits the DoFs in antenna position optimization. Thus, more efficient architectures of MA arrays are desired to achieve low hardware complexity while maintaining high communication performance.

\begin{figure}[t]
	\begin{center}
		\includegraphics[width=\figwidth cm]{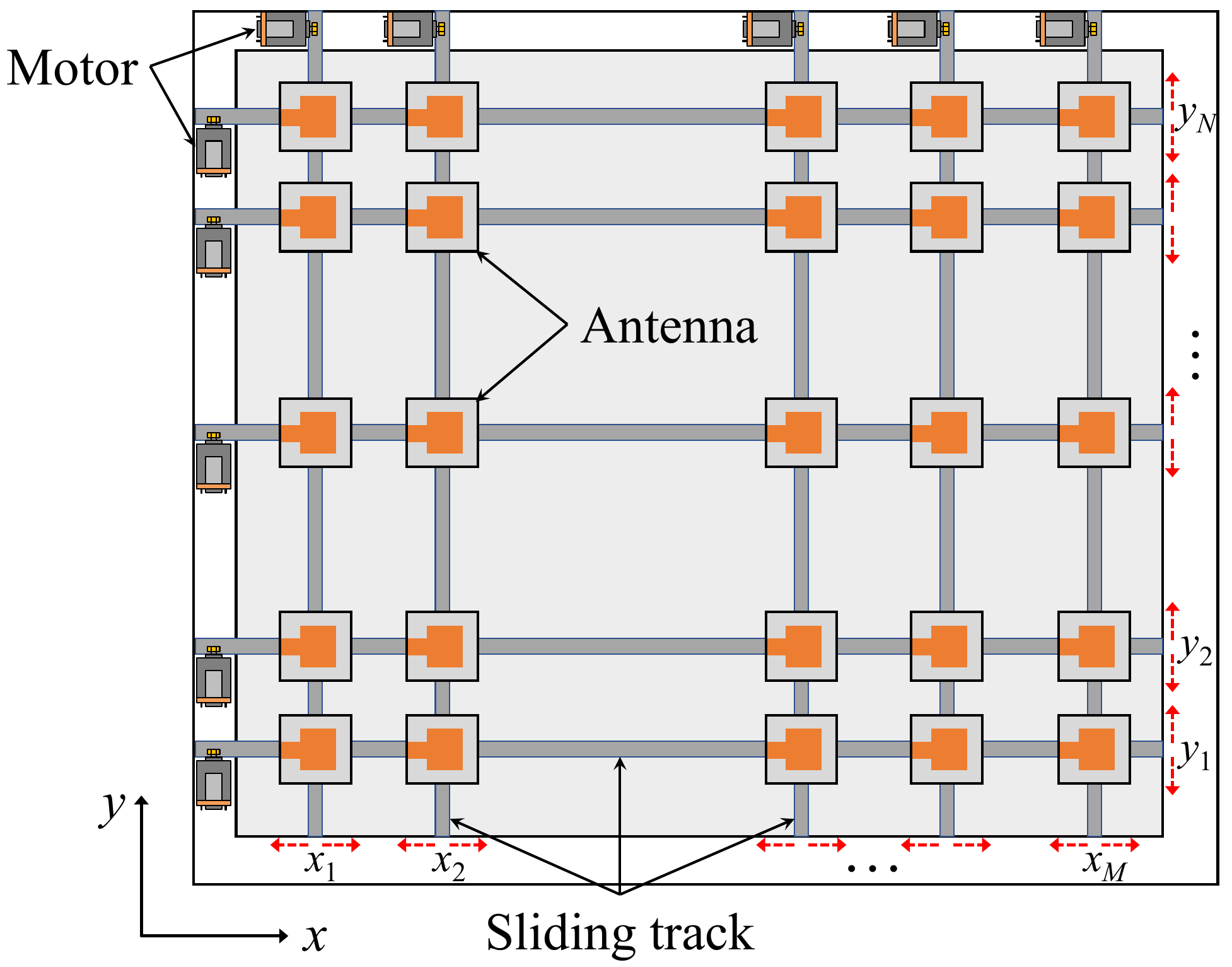}
		\caption{The architecture of the proposed CL-MA array.}
		\label{fig:architecture}
	\end{center}
\end{figure}

To address this challenge, we propose in this paper a novel architecture of cross-linked MA (CL-MA) array. As shown in Fig. \ref{fig:architecture}, the CL-MA array consists of multiple horizontal and vertical sliding tracks on the 2D plane. Each MA element/subarray is installed at the cross point of a horizontal track and a vertical track. With the aid of a motor mounted on the upper side of the array (see Fig. \ref{fig:architecture}), each vertical track can move along the horizontal direction, which changes the horizontal positions of all MAs in its corresponding column. Similarly, each row of MAs can collectively move along the vertical direction with the aid of a motor mounted on the left side of the array. Note that for conventional MA arrays with element/subarray-wise independent movement, the number of motors required for 2D movement is at least twice the total number of antennas/subarrays (e.g., an $M\times N$ 2D array with element-wise movement requires $2MN$ motors with the 2D-moving capability). In comparison, the required number of motors for the proposed CL-MA architecture is equal to the number of horizontal and vertical tracks (i.e., $M+N$ motors for the same $M\times N$ 2D array), which can significantly reduce hardware cost and control complexity for 2D arrays with large values of $M$ and $N$. 

To examine the performance advantages of the proposed CL-MA array, we investigate its application in multiuser communication systems. The main contributions of this paper are summarized as follows:
\begin{itemize}
	\item We consider an uplink communication scenario where the BS employs a CL-MA array to enhance the performance of multiple single-antenna users. The channel vector between the BS and each user is modeled as a continuous function of the horizontal and vertical antenna position vectors (APVs) of the CL-MA array. To fully exploit the spatial DoFs, the APVs are jointly optimized with the combining matrix at the BS and the users’ transmit power, aiming to minimize the total transmit power while ensuring a minimum rate requirement for each user.
	\item To characterize the performance limit of the considered system, a globally lower bound on the transmit power of each user is derived. In particular, for the case of a single channel path for each user, we reveal the condition on the numbers of users and antennas under which the lower bound is tight. Optimal solutions for the APVs for achieving this lower bound are derived in closed form. For the general case of multiple channel paths, we develop a low-complexity algorithm to conduct discrete antenna position optimization. 
	\item To further reduce antenna movement overhead, we propose a statistical channel-based antenna position optimization approach. Given the required statistical channel knowledge, the APVs are optimized and remain unchanged over a long time period to minimize the expected total transmit power of users. The Monte Carlo simulation method is employed to approximate the expectation on the total transmit power, and then the proposed algorithm of discrete antenna position optimization is applied to solve this problem.
	\item Simulation results show that the proposed CL-MA scheme significantly outperforms conventional FPA systems based on either dense or sparse arrays and closely approaches the theoretical lower bound on the total transmit power of users. Furthermore, its performance loss compared to the element-wise MA scheme is negligible. Notably, the statistical channel-based antenna position optimization achieves an appealing performance for CL-MA arrays with significantly reduced antenna movement overhead, thus offering a practically efficient solution for their implementation.
\end{itemize}

The remainder of this paper is organized as follows. Section II introduces the architecture of the CL-MA array along with the system and channel models. In Section III, we analyze the lower bound on the total transmit power of users and derive the optimal APVs for the multiuser communication system enabled by CL-MA arrays. Section IV details the proposed algorithm for discrete antenna position optimization, while Section V presents the solution for statistical channel-based antenna position optimization. Simulation results are presented and discussed in Section VI, and conclusions are finally drawn in Section VII.

\textit{Notation}: $a$, $\mathbf{a}$, $\mathbf{A}$, and $\mathcal{A}$ denote a scalar, a vector, a matrix, and a set, respectively. $(\cdot)^{*}$, $(\cdot)^{\rm{T}}$, $(\cdot)^{\rm{H}}$, and $(\cdot)^{\dagger}$ denote the conjugate, transpose, conjugate transpose, and Moore-Penrose inverse, respectively. $[\mathbf{a}]_{n}$ denotes the $n$-th entry of vector $\mathbf{a}$. $[\mathbf{A}]_{i,j}$ and $[\mathbf{A}]_{:,j}$ denote the entry in row $i$ and column $j$ and the $j$-th column vector of matrix $\mathbf{A}$, respectively. $\mathbb{Z}^{M \times N}$, $\mathbb{R}^{M \times N}$, and $\mathbb{C}^{M \times N}$ represent the sets of integer, real, and complex matrices/vectors of dimension $M \times N$, respectively. $\Re(\cdot)$, $|\cdot|$, and $\angle(\cdot)$ denote the real part, the amplitude, and the phase of a complex number or complex vector, respectively. $|\mathcal{A}|$ is the cardinality of set $\mathcal{A}$. $\varPhi$ denotes the empty set. $\mathcal{A} \cap \mathcal{B}$ and $\mathcal{A} \cup \mathcal{B}$ represent the intersection set and the union set of $\mathcal{A}$ and $\mathcal{B}$, respectively. $\|\cdot\|_{1}$ and $\|\cdot\|_{2}$ denote the 1-norm and 2-norm of a vector, respectively. $\|\cdot\|_{\mathrm{F}}$ represents the Frobenius norm of a matrix. $\otimes$ and $\odot$ denote the Kronecker product and the Khatri–Rao product (i.e., column-wise Kronecker product), respectively. $\mathrm{diag}\{\mathbf{a}\}$ represents a diagonal matrix with the diagonal entries given by $\mathbf{a}$. $\mathbf{1}_{M \times 1}$ denotes an $M$-dimensional column vector with all elements equal to 1. $\mathbf{I}_{M}$ denotes the $M$-dimensional identity matrix. $\mathbf{a} \sim \mathcal{CN}(\mathbf{0}, \mathbf{\Omega})$ indicates that $\mathbf{a}$ is a circularly symmetric complex Gaussian (CSCG) random vector with mean zero and covariance $\mathbf{\Omega}$. $\mathbb{E}\{\cdot\}$ is the expectation of a random variable.

\section{System Model}

\begin{figure}[t]
	\begin{center}
		\includegraphics[width=\figwidth cm]{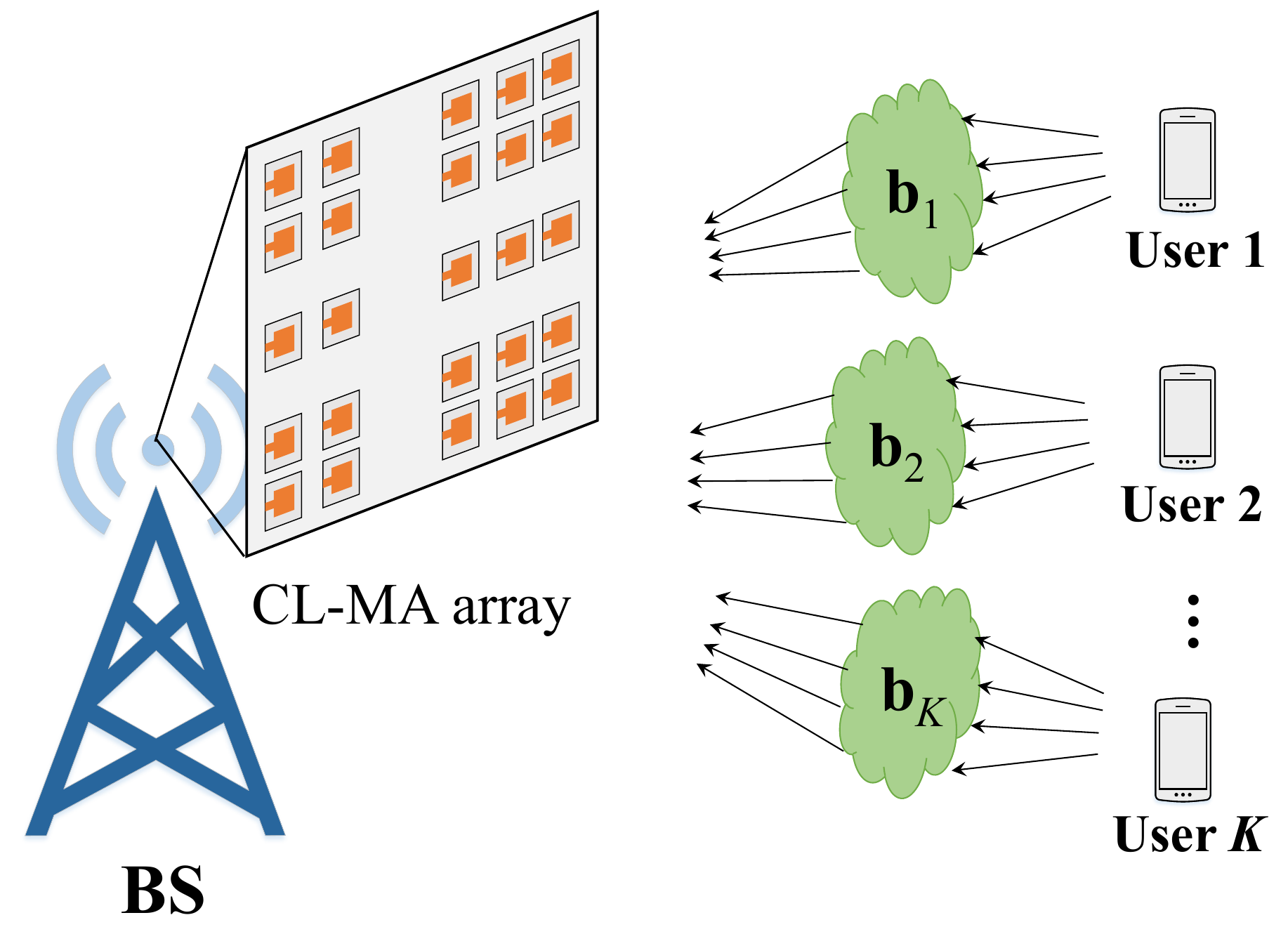}
		\caption{The proposed CL-MA array enabled BS for multiuser communication.}
		\label{fig:system}
	\end{center}
\end{figure}

We consider a sector of the BS shown in Fig. \ref{fig:system}, where a CL-MA array is employed to serve $K$ users in the uplink\footnote{Due to the duality between the uplink and downlink, the optimized positions of MAs based on uplink channels can also improve the downlink communication performance \cite{zhu2023MAmultiuser,Boche2002duality}.}. The users are each equipped with a single FPA and randomly distributed within the coverage sector of the BS. As shown in Fig. \ref{fig:architecture}, the CL-MA array consists of $M$ vertical and $N$ horizontal sliding tracks on the 2D plane. Each MA element is installed at the cross point of a horizontal track and a vertical track, which is connected to the RF chain via a flexible cable to enable movement\footnote{The proposed CL-MA architecture can also accommodate a subarray at each cross point, maintaining the same hardware complexity for the movement module while accommodating more antennas.}. Thus, the total number of antennas is $MN$. With the aid of $M$ motors mounted on the upper side of the array, the vertical tracks can move along the horizontal direction, which changes the horizontal positions of each column of MAs. Similarly, with the aid of $N$ motors mounted on the left side of the array, each row of MAs can collectively move along the vertical direction. Let $\mathbf{x}=[x_{1},x_{2},\dots,x_{M}]^{\mathrm{T}}$ and $\mathbf{y}=[y_{1},y_{2},\dots,y_{N}]^{\mathrm{T}}$ denote the horizontal and vertical APVs, respectively. Specifically, $x_{m}$ represents the horizontal coordinate of the $m$-th column of MAs, while $y_{n}$ represents the vertical coordinate of the $n$-th row of MAs. Without loss of generality, we sort the coordinates in an increasing order, i.e., $x_{1} < x_{2} < \dots <x_{M}$ and $y_{1} < y_{2} < \dots <y_{N}$. The antenna moving region is defined as a rectangle of size $[0, x_{\max}] \times [0, y_{\max}]$.

\begin{figure}[t]
	\begin{center}
		\includegraphics[width=5.0 cm]{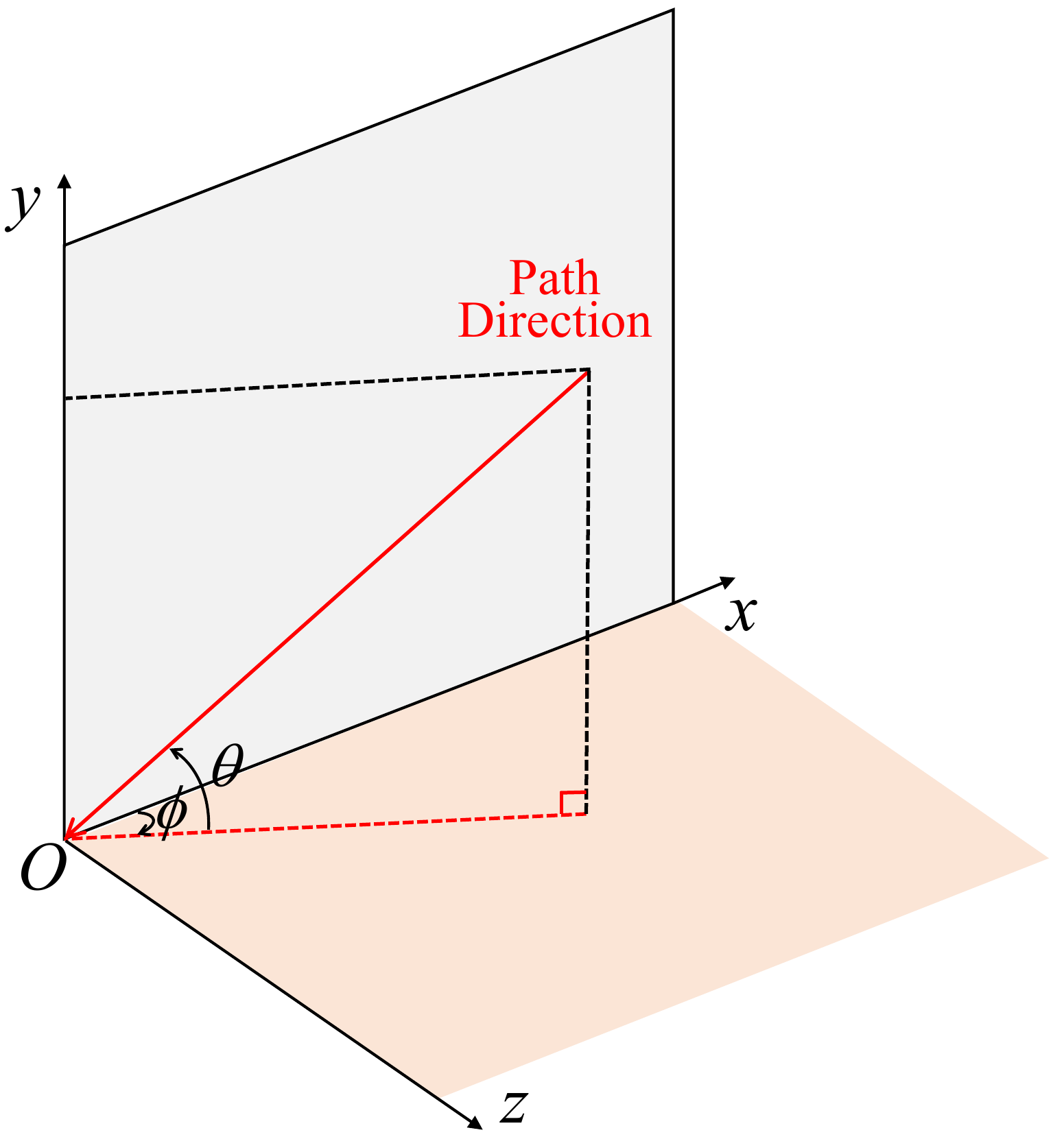}
		\caption{Illustration of the coordinate system and spatial angles at the BS.}
		\label{fig:coordinates}
	\end{center}
\end{figure}

Unlike conventional MA systems with independent positioning for each antenna element, the proposed CL-MA architecture is featured by the collective movement of antennas in each row/column. Thus, the channel vector between the BS and user $k$, $1 \leq k \leq K$, can be expressed as a function of the APVs for the CL-MA array, i.e., $\mathbf{h}_{k}(\mathbf{x}, \mathbf{y}) \in \mathbb{C}^{MN \times 1}$. For the considered uplink transmission based on spatial division multiple access (SDMA), the signal vector received at the BS is given by
\begin{equation}\label{eq_signal}
	\hat{\mathbf{s}} = \mathbf{W}^{\mathrm{H}}\mathbf{H}(\mathbf{x}, \mathbf{y})\mathbf{P}^{1/2}\mathbf{s} + \mathbf{W}^{\mathrm{H}}\mathbf{z},
\end{equation}
where $\mathbf{W} = [\mathbf{w}_{1},\mathbf{w}_{2},\dots,\mathbf{w}_{K}] \in \mathbb{C}^{MN \times K}$ is the receive combining matrix at the BS, $\mathbf{H}(\mathbf{x}, \mathbf{y}) = [\mathbf{h}_{1}(\mathbf{x}, \mathbf{y}), \mathbf{h}_{2}(\mathbf{x}, \mathbf{y}), \dots, \mathbf{h}_{K}(\mathbf{x}, \mathbf{y})] \in \mathbb{C}^{MN \times K}$ denotes the channel matrix between the $K$ users and the BS, and $\mathbf{P}^{1/2} = \mathrm{diag}\{\sqrt{p_{1}}, \sqrt{p_{2}}, \dots, \sqrt{p_{K}}\} \in \mathbb{R}^{K \times K}$ is the power scaling matrix, with $p_{k}$ denoting the transmit power of user $k$. In addition, $\mathbf{s} \sim \mathcal{CN}(0, \mathbf{I}_{K})$ and $\mathbf{z} \sim \mathcal{CN}(0, \sigma^{2}\mathbf{I}_{MN})$ are the transmit signal vector of the $K$ users with normalized power and the noise at the BS with average power $\sigma^{2}$.

\subsection{Channel Model}
We adopt the field-response channel model to characterize the variation of channel with respect to (w.r.t) antenna positions \cite{zhu2022MAmodel,ma2022MAmimo}. We assume the far-field channel condition between the BS and users since the size of the antenna moving region (in the order of several wavelengths) is usually much smaller than the signal propagation distance. For user $k$, we denote the number of channel paths received at the BS as $L_{k}$. As shown in Fig. 3, the physical elevation angle of arrival (AoA) and azimuth AoA for the $\ell$-th path, $1 \leq \ell \leq L_{k}$, of user $k$ are denoted as $\theta_{k,\ell}$ and $\phi_{k,\ell}$, respectively. Then, we define the horizontal and vertical field-response vectors (FRVs) as \cite{zhu2022MAmodel,ma2022MAmimo}
\begin{subequations}\label{eq_FRV}
	\begin{align}
		\mathbf{f}_{k}^{\mathrm{hor}}(x)&=\Big{[}\e^{\jj \frac{2\pi}{\lambda}x \vartheta_{k,1}}, \e^{\jj \frac{2\pi}{\lambda}x \vartheta_{k,2}},\dots,\e^{\jj \frac{2\pi}{\lambda}x \vartheta_{k,L_{k}}} \Big{]}^{\mathrm{T}},\\
		\mathbf{f}_{k}^{\mathrm{ver}}(y)&=\Big{[}\e^{\jj \frac{2\pi}{\lambda}y \varphi_{k,1}}, \e^{\jj \frac{2\pi}{\lambda}y \varphi_{k,2}},\dots,\e^{\jj \frac{2\pi}{\lambda}y \varphi_{k,L_{k}}} \Big{]}^{\mathrm{T}},
	\end{align}
\end{subequations}
where $\vartheta_{k,\ell} \triangleq \cos \theta_{k,\ell} \cos \phi_{k,\ell}$ and $\varphi_{k,\ell} \triangleq \sin \theta_{k,\ell} $ denote the virtual AoAs for the $\ell$-th channel path of user $k$. In particular, term $\frac{2\pi}{\lambda}x \vartheta_{k,\ell}$ represents the phase difference in the channel coefficient for the $\ell$-th path of user $k$ between horizontal antenna position $x$ and the reference point. Similarly, term $\frac{2\pi}{\lambda}y \varphi_{k,\ell}$ represents the phase difference in the channel coefficient for the $\ell$-th path of user $k$ between vertical antenna position $y$ and the reference point. 

Collecting the FRVs for all antennas' positions, we obtain the horizontal and vertical field-response matrices (FRMs) as
\begin{subequations}\label{eq_FRM}
	\begin{align}
		\mathbf{F}_{k}^{\mathrm{hor}}(\mathbf{x})&=\big{[}\mathbf{f}_{k}^{\mathrm{hor}}(x_{1}), \mathbf{f}_{k}^{\mathrm{hor}}(x_{2}), \dots, \mathbf{f}_{k}^{\mathrm{hor}}(x_{M})\big{]} \in \mathbb{C}^{L_{k} \times M},\\
		\mathbf{F}_{k}^{\mathrm{ver}}(\mathbf{y})&=\big{[}\mathbf{f}_{k}^{\mathrm{ver}}(y_{1}), \mathbf{f}_{k}^{\mathrm{ver}}(y_{2}), \dots, \mathbf{f}_{k}^{\mathrm{ver}}(y_{N})\big{]} \in \mathbb{C}^{L_{k} \times N}.
	\end{align}
\end{subequations}
Denote $\mathbf{b}_{k} \in \mathbb{C}^{L_{k} \times 1}$ as the path response vector, the $\ell$-th element of which represents the complex coefficient of the the $\ell$-th channel path from user $k$ to the reference point of the BS. Thus, the instantaneous channel vector for user $k$, $1 \leq k \leq K$, can be obtained as
\begin{equation}\label{eq_channel}
	\mathbf{h}_{k}(\mathbf{x}, \mathbf{y}) = \left(\mathbf{F}_{k}^{\mathrm{hor}}(\mathbf{x})^{\mathrm{H}} \odot \mathbf{F}_{k}^{\mathrm{ver}}(\mathbf{y})^{\mathrm{H}} \right) \mathbf{b}_{k} \in \mathbb{C}^{MN \times 1}.
\end{equation}
Thus, we have the channel matrix between the BS and users as $\mathbf{H}(\mathbf{x}, \mathbf{y}) = [\mathbf{h}_{1}(\mathbf{x}, \mathbf{y}), \mathbf{h}_{2}(\mathbf{x}, \mathbf{y}), \dots, \mathbf{h}_{K}(\mathbf{x}, \mathbf{y})]$, with $\mathbf{H}$ representing the channel mapping from the APVs to the channel responses of all users\footnote{In the sequel of this paper, we assume that the channel matrix between the BS and users is of full column rank, which holds with probability one for multiuser communication systems with randomly fading channels \cite{TseFundaWC}.}.

\subsection{Problem Formulation}
According to \eqref{eq_signal}, the received signal-to-interference-plus-noise ratio (SINR) for user $k$ is given by
\begin{equation}\label{eq_SINR}
	\gamma_{k} = \frac{|\mathbf{w}_{k}^{\mathrm{H}}\mathbf{h}_{k}(\mathbf{x}, \mathbf{y})|^{2}p_{k}}{\sum \limits _{q=1,q \neq k}^{K} |\mathbf{w}_{k}^{\mathrm{H}}\mathbf{h}_{q}(\mathbf{x}, \mathbf{y})|^{2}p_{q} + \|\mathbf{w}_{k}\|_{2}^{2}\sigma^{2}}.
\end{equation}
In this paper, we aim to minimize the total transmit power of users by jointly optimizing the APVs and receive combining matrix at the BS and the transmit power of users, with given minimum rate constraint on each user. The optimization problem is thus formulated as
\begin{subequations}\label{eq_problem_ist}
	\begin{align}
		\mathop{\min} \limits_{\mathbf{x}, \mathbf{y}, \mathbf{W}, \mathbf{p}} ~ &\sum \limits_{k=1}^{K} p_{k} \label{eq_problem_ist_a}\\
		\mathrm{s.t.}~~~~  &\log_{2}(1+\gamma_{k}) \geq r_{k},~1 \leq k \leq K,\label{eq_problem_ist_b}\\
		&p_{k} \geq 0,~1 \leq k \leq K, \label{eq_problem_ist_c}\\
		&0 \leq x_{m} \leq x_{\max},~1 \leq m \leq M, \label{eq_problem_ist_d}\\
		&x_{m+1} - x_{m} \geq d_{\min,x},~1 \leq m \leq M-1, \label{eq_problem_ist_e}\\
		&0 \leq y_{n} \leq y_{\max},~1 \leq n \leq N, \label{eq_problem_ist_f}\\
		&y_{n+1} - y_{n} \geq d_{\min,y},~1 \leq n \leq N-1, \label{eq_problem_ist_g}
	\end{align}
\end{subequations}
where constraint \eqref{eq_problem_ist_b} guarantees each user satisfying the minimum rate requirement, $r_{k}$, constraint \eqref{eq_problem_ist_c} ensures the non-negative transmit power of each user, constraints \eqref{eq_problem_ist_d} and \eqref{eq_problem_ist_f} confine the antenna moving within region $[0, x_{\max}] \times [0, y_{\max}]$, and \eqref{eq_problem_ist_e} and \eqref{eq_problem_ist_g} constrain the inter-antenna spacing over the horizontal and vertical directions being no less than minimum thresholds $d_{\min,x}$ and $d_{\min,y}$, respectively. Problem \eqref{eq_problem_ist} is challenging to be optimally solved because the achievable rate in \eqref{eq_problem_ist_b} has a non-concave form w.r.t. the highly coupled optimization variables. In the following, we first present analytical results to characterize the performance bound of the considered system, and then develop practical algorithms to obtain suboptimal solutions.

\section{Performance Analysis}
In this section, the globally lower bound on the transmit power of users in problem \eqref{eq_problem_ist} is analyzed. Then, optimal solutions for the APVs are derived in closed form under specific conditions on the number of users and channel paths. In particular, the constraints on antenna moving region, i.e., \eqref{eq_problem_ist_d} and \eqref{eq_problem_ist_f}, are omitted to facilitate performance analysis, which indicates that the antenna moving region can be arbitrarily large.

\begin{theorem}\label{theo_bound}
	A lower bound on the transmit power of each user in problem \eqref{eq_problem_ist} is given by
	\begin{equation}\label{eq_bound_power}
		\bar{p}_{k} = \frac{\sigma^{2}(2^{r_{k}-1})}{MN\|\mathbf{b}_{k}\|_{1}^{2}},~1 \leq k \leq K.
	\end{equation}
\end{theorem}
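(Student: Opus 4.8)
The plan is to lower-bound each user's power by weakening the rate constraint in two stages and then supplying a position-independent upper bound on the channel norm. First I would convert the rate requirement \eqref{eq_problem_ist_b} into the equivalent SINR constraint $\gamma_{k} \geq 2^{r_{k}}-1$. Since all interference terms in the denominator of \eqref{eq_SINR} are nonnegative, dropping them can only increase the ratio, so $\gamma_{k} \leq |\mathbf{w}_{k}^{\mathrm{H}}\mathbf{h}_{k}(\mathbf{x},\mathbf{y})|^{2}p_{k}/(\|\mathbf{w}_{k}\|_{2}^{2}\sigma^{2})$; that is, the interference-free SNR upper-bounds the achievable SINR. Applying the Cauchy--Schwarz inequality $|\mathbf{w}_{k}^{\mathrm{H}}\mathbf{h}_{k}|^{2} \leq \|\mathbf{w}_{k}\|_{2}^{2}\|\mathbf{h}_{k}\|_{2}^{2}$ then cancels the combining vector entirely and yields the intermediate bound $p_{k} \geq (2^{r_{k}}-1)\sigma^{2}/\|\mathbf{h}_{k}(\mathbf{x},\mathbf{y})\|_{2}^{2}$, valid for every feasible $(\mathbf{x},\mathbf{y},\mathbf{W},\mathbf{p})$.

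Because this intermediate bound still depends on the optimizable antenna positions through $\|\mathbf{h}_{k}(\mathbf{x},\mathbf{y})\|_{2}^{2}$, the crux is to bound this channel norm from above uniformly over all admissible $\mathbf{x},\mathbf{y}$. Here I would exploit the Khatri--Rao structure in \eqref{eq_channel}: writing $\mathbf{h}_{k} = \sum_{\ell=1}^{L_{k}}[\mathbf{b}_{k}]_{\ell}\,\mathbf{g}_{k,\ell}$, where $\mathbf{g}_{k,\ell}$ is the $\ell$-th column of $\mathbf{F}_{k}^{\mathrm{hor}}(\mathbf{x})^{\mathrm{H}}\odot\mathbf{F}_{k}^{\mathrm{ver}}(\mathbf{y})^{\mathrm{H}}$, i.e., the Kronecker product of the $\ell$-th columns of the two conjugated FRMs. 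By the FRV definitions \eqref{eq_FRV}, every entry of each FRM column is a unit-modulus complex exponential, so each $\mathbf{g}_{k,\ell}$ has all $MN$ entries of modulus one and hence $\|\mathbf{g}_{k,\ell}\|_{2} = \sqrt{MN}$ regardless of the positions. The triangle inequality then gives $\|\mathbf{h}_{k}\|_{2} \leq \sum_{\ell=1}^{L_{k}}|[\mathbf{b}_{k}]_{\ell}|\,\|\mathbf{g}_{k,\ell}\|_{2} = \sqrt{MN}\,\|\mathbf{b}_{k}\|_{1}$, so that $\|\mathbf{h}_{k}(\mathbf{x},\mathbf{y})\|_{2}^{2} \leq MN\|\mathbf{b}_{k}\|_{1}^{2}$ for all admissible positions.

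Substituting this position-independent bound into the intermediate inequality yields $p_{k} \geq (2^{r_{k}}-1)\sigma^{2}/(MN\|\mathbf{b}_{k}\|_{1}^{2}) = \bar{p}_{k}$, completing the argument. The main obstacle is the middle step: recognizing that the per-path columns have constant modulus, so that their norms, and therefore the triangle-inequality bound, are completely decoupled from the antenna positions; everything else is a routine chain of standard inequalities. It is worth flagging the equality conditions here, since they will be needed in the subsequent tightness analysis: the Cauchy--Schwarz step is tight when $\mathbf{w}_{k}$ is matched to $\mathbf{h}_{k}$, the interference-dropping step is tight under interference-free (e.g., mutually orthogonal channel) conditions, and the triangle inequality is tight when the $L_{k}$ path contributions $[\mathbf{b}_{k}]_{\ell}\,\mathbf{g}_{k,\ell}$ are co-phased at every antenna, which the APVs can be designed to achieve.
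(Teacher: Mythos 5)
Your proposal is correct and follows essentially the same route as the paper's proof: drop the interference to bound the SINR by the single-user SNR, eliminate the combiner via Cauchy--Schwarz (equivalently, the paper's MRC-optimality step), and then bound $\|\mathbf{h}_{k}\|_{2}^{2}$ by $MN\|\mathbf{b}_{k}\|_{1}^{2}$ using the unit-modulus structure of the field-response entries (the paper phrases this as entry-wise phase alignment rather than your per-path triangle inequality, but the estimate is identical). Your closing remarks on the equality conditions also match the CVO/MCP discussion the paper gives immediately after the theorem.
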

\begin{proof}
	See Appendix \ref{App_bound}.
\end{proof}

According to the proof of Theorem \ref{theo_bound}, achieving the lower bound on the transmit power in \eqref{eq_bound_power} requires two conditions on the channel vectors between the BS and users. On one hand, to guarantee the equalities in \eqref{eq_bound_rate} and \eqref{eq_bound_MRC}, the channel vectors for multiple users should be orthogonal to each other, i.e., $\mathbf{h}_{k}(\mathbf{x}, \mathbf{y})^{\mathrm{H}}\mathbf{h}_{q}(\mathbf{x}, \mathbf{y})=0$, $\forall k \neq q$, which is termed as the \emph{channel vector orthogonality (CVO) condition}. On the other hand, the channel power gain for each user should achieve its upper bound, i.e., $\|\mathbf{h}_{k}(\mathbf{x}, \mathbf{y})\|_{2}^{2}= MN\|\mathbf{b}_{k}\|_{1}^{2}$ holds in \eqref{eq_bound_channel_power}, which is termed as the \emph{maximal channel power (MCP) condition}. In addition, two implicit conditions are required, i.e., the MRC is adopted at the BS and the achievable rate for each user is equal to its minimum requirement, which can be easily achieved by setting the combining matrix as $\mathbf{W}=\mathbf{H}(\mathbf{x}, \mathbf{y})$ and the transmit power as $p_{k} = \bar{p}_{k}$, $1 \leq k \leq K$.

It has been shown in \cite{zhu2022MAmodel} that for MA-enabled SISO or MISO systems, the MCP condition can be achieved if the number of channel paths is no larger than $3$ and each antenna can be independently moved in the 2D plane. However, for the considered multiuser MIMO systems with arbitrary number of channel paths as well as the proposed CL-MA array with collective antenna movement, it is generally difficult to guarantee the MCP condition. Moreover, the CVO condition is also difficult to satisfy due to the intricate superposition of multiple channel paths shown in \eqref{eq_channel}. To facilitate performance analysis, we consider the special case with a single  channel path received at the BS for each user, i.e., $L_{k}=1$, $1 \leq k \leq K$. This scenario makes sense for communication systems operating at high-frequency bands, e.g., millimeter-wave and terahertz bands, with the wireless channels dominated by an LoS or reflected non-LoS (NLoS) path. 

Under the condition of a single (LoS/NLoS) channel path for each user, the channel vector for user $k$ in \eqref{eq_channel} is recast into 
\begin{equation}\label{eq_channel_LoS}
	\hat{\mathbf{h}}_{k}(\mathbf{x}, \mathbf{y}) = b_{k} \mathbf{a}_{k}^{\mathrm{hor}}(\mathbf{x}) \otimes \mathbf{a}_{k}^{\mathrm{ver}}(\mathbf{y}),~1 \leq k \leq K,
\end{equation}
where $b_{k}$ is the complex coefficient for the channel path. $\mathbf{a}_{k}^{\mathrm{hor}}(\mathbf{x})$ and $\mathbf{a}_{k}^{\mathrm{ver}}(\mathbf{y})$ are horizontal and vertical steering vectors given by
\begin{subequations}\label{eq_steer}
	\begin{align}
		\mathbf{a}_{k}^{\mathrm{hor}}(\mathbf{x})&=\big{[}\e^{-\jj\frac{2\pi}{\lambda}x_{1} \vartheta_{k}}, \e^{-\jj\frac{2\pi}{\lambda}x_{2} \vartheta_{k}},\dots,\e^{-\jj\frac{2\pi}{\lambda}x_{M} \vartheta_{k}} \big{]}^{\mathrm{T}},\\
		\mathbf{a}_{k}^{\mathrm{ver}}(\mathbf{y})&=\big{[}\e^{-\jj\frac{2\pi}{\lambda}y_{1} \varphi_{k}}, \e^{-\jj\frac{2\pi}{\lambda}y_{2} \varphi_{k}},\dots,\e^{-\jj\frac{2\pi}{\lambda}y_{N} \varphi_{k}} \big{]}^{\mathrm{T}},
	\end{align}
\end{subequations}
where $\vartheta_{k}$ and $\varphi_{k}$ denote the virtual AoAs for the channel path of user $k$\footnote{For multiuser communication systems with random distributions of users and scatterers, the probability that two users have exactly the same AoA is zero \cite{zhu2024wideband}. Thus, we assume that the AoAs are distinguished for different users, which can guarantee the multiuser channel matrix to have full column rank.}. In fact, the horizontal/vertical steering vector in \eqref{eq_steer} is identical to the conjugate transpose of the FRM with $L_{k}=1$. 

\begin{theorem}\label{theo_bound_condition}
	Under the condition of a single channel path for each user, the lower bound on the transmit power in \eqref{eq_bound_power} is tight for $K(K-1)/2 \leq I_{M}+I_{N}$, with $I_{M}$ and $I_{N}$ denoting the total number of prime factors of $M$ and $N$, respectively.
\end{theorem}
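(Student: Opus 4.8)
The plan is to prove \emph{tightness} by explicitly constructing APVs $\mathbf{x},\mathbf{y}$ (together with the implicit choices $\mathbf{W}=\mathbf{H}(\mathbf{x},\mathbf{y})$ and $p_k=\bar p_k$) that simultaneously meet the two conditions identified in the discussion following Theorem~\ref{theo_bound}, namely the MCP and CVO conditions. The MCP condition is immediate in the single-path case: from \eqref{eq_channel_LoS} and the unit-modulus entries of the steering vectors in \eqref{eq_steer}, $\|\hat{\mathbf h}_k(\mathbf{x},\mathbf{y})\|_2^2=|b_k|^2\|\mathbf a_k^{\mathrm{hor}}(\mathbf{x})\|_2^2\|\mathbf a_k^{\mathrm{ver}}(\mathbf{y})\|_2^2=|b_k|^2 MN=MN\|\mathbf b_k\|_1^2$ for \emph{any} positions, so only CVO must be engineered. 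Using the Kronecker identity $(\mathbf a\otimes\mathbf b)^{\mathrm H}(\mathbf c\otimes\mathbf d)=(\mathbf a^{\mathrm H}\mathbf c)(\mathbf b^{\mathrm H}\mathbf d)$, I would first rewrite $\hat{\mathbf h}_k^{\mathrm H}\hat{\mathbf h}_q = b_k^{*}b_q\,S^{\mathrm{hor}}_{kq}\,S^{\mathrm{ver}}_{kq}$, where $S^{\mathrm{hor}}_{kq}=\sum_{m=1}^{M}\e^{-\jj\frac{2\pi}{\lambda}x_m(\vartheta_q-\vartheta_k)}$ and $S^{\mathrm{ver}}_{kq}=\sum_{n=1}^{N}\e^{-\jj\frac{2\pi}{\lambda}y_n(\varphi_q-\varphi_k)}$. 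Thus CVO reduces to forcing, for each of the $K(K-1)/2$ user pairs, \emph{either} its horizontal sum \emph{or} its vertical sum to vanish.

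The core idea is a nested (mixed-radix) placement that converts each prime factor into one independent nulling knob. Writing the prime factorization $M=p_1 p_2\cdots p_{I_M}$, I would index the $M$ columns by tuples $(i_1,\dots,i_{I_M})$ with $i_s\in\{0,\dots,p_s-1\}$ and set $x_{(i_1,\dots,i_{I_M})}=x_0+\sum_{s=1}^{I_M} i_s c_s$ for coefficients $c_s$ to be chosen. Because the exponential of a sum factorizes, the horizontal sum separates into a product $S^{\mathrm{hor}}_{kq}=\e^{-\jj\frac{2\pi}{\lambda}x_0(\vartheta_q-\vartheta_k)}\prod_{s=1}^{I_M}\big(\sum_{i=0}^{p_s-1}\e^{-\jj\frac{2\pi}{\lambda}i c_s(\vartheta_q-\vartheta_k)}\big)$ of $I_M$ geometric-sum factors, and the whole product vanishes whenever any single factor does. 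A geometric sum of $p_s$ equally spaced unit phasors is zero precisely when $\frac{c_s(\vartheta_q-\vartheta_k)}{\lambda}=\frac{t}{p_s}$ for some integer $t\notin p_s\mathbb Z$; hence level $s$ can be dedicated to a single chosen pair by setting $c_s=\frac{\lambda t_s}{p_s(\vartheta_{q}-\vartheta_{k})}$. The identical construction applied to $N=q_1\cdots q_{I_N}$ yields $I_N$ vertical nulling knobs.

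I would then finish with a counting/assignment argument. Since there are $I_M$ horizontal knobs and $I_N$ vertical knobs, and $K(K-1)/2\le I_M+I_N$ by hypothesis, I can injectively assign every user pair to a distinct knob. A pair routed to a horizontal level $s$ is nulled through $S^{\mathrm{hor}}_{kq}=0$, and a pair routed to a vertical level through $S^{\mathrm{ver}}_{kq}=0$; note the other factors in the product are irrelevant, since one vanishing factor suffices. Consequently $\hat{\mathbf h}_k^{\mathrm H}\hat{\mathbf h}_q=0$ for all $k\ne q$, so CVO holds. Combining this with the automatic MCP condition and the MRC/rate-matching choices $\mathbf W=\mathbf H(\mathbf{x},\mathbf{y})$, $p_k=\bar p_k$, each user attains exactly $\bar p_k$, whence the lower bound \eqref{eq_bound_power} is achieved and therefore tight.

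The main obstacle I anticipate is the assignment/feasibility bookkeeping rather than the algebra. First, routing requires that the pair assigned to a horizontal (resp.\ vertical) knob satisfies $\vartheta_q\ne\vartheta_k$ (resp.\ $\varphi_q\ne\varphi_k$); this holds generically under the distinct-AoA assumption of the footnote, but a careful version should argue that the injection can always be chosen to respect it. Second, I must confirm the constructed positions can be made distinct and spacing-compliant: since the region is taken arbitrarily large in this section, the freedom in the integers $t_s$ (any representative of a nonzero residue modulo $p_s$) lets me dilate the mixed-radix grid so that all $x_m$ are distinct and meet any $d_{\min,x}$, after which sorting restores $x_1<\dots<x_M$. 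Finally, I would explain why \emph{prime} factors appear at all: among all factorizations $M=\prod_j q_j$ with $q_j\ge 2$, the number of factors is maximized by the prime factorization, so $I_M$ (and likewise $I_N$) is exactly the largest number of independent product-form nulling knobs available in each direction.
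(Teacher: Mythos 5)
Your proposal is correct and follows essentially the same route as the paper: reduce tightness to the CVO condition (MCP being automatic for a single path), factor the channel inner product via the Kronecker identity into a horizontal and a vertical steering-vector correlation, and split the $K(K-1)/2$ user pairs between the $I_{M}$ horizontal and $I_{N}$ vertical nulling resources. The only difference is that where the paper invokes an external result for the existence of an APV nulling up to $I_{M}$ pairs on a linear array, you construct it inline via the mixed-radix (prime-factorization) grid, which is precisely the construction the paper itself presents later in Theorem \ref{theo_APV} and Appendix \ref{App_APV}.
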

\begin{proof}
	See Appendix \ref{App_bound_condition}.
\end{proof}

Theorem \ref{theo_bound_condition} indicates that the lower bound on the transmit power of each user given by Theorem \ref{theo_bound} is achievable if the number of users is sufficiently small. Next, we provide optimal solutions for the APVs, i.e., $\mathbf{x}^{\star}$ and $\mathbf{y}^{\star}$, to achieve this lower bound. As can be observed from the proof of Theorem \ref{theo_bound_condition}, the optimal APVs are not unique, while they change with the different partitions between $\mathcal{P}_{1}$ and $\mathcal{P}_{2}$. For ease of exposition, we denote the $i$-th element in $\mathcal{P}_{1}$ as $(k_{1}^{(i)},q_{1}^{(i)})$, $1 \leq i \leq |\mathcal{P}_{1}| \leq I_{M}$. Similarly, the $i$-th element in $\mathcal{P}_{2}$ is denoted as $(k_{2}^{(i)},q_{2}^{(i)})$, $1 \leq i \leq |\mathcal{P}_{2}| \leq I_{N}$.

Following the method in \cite{zhu2023MAarray}, we denote the prime factorization of integer $M$ as $M=\prod_{i=1}^{I_{M}} m_{i}$, with $m_{i}$ being the $i$-th prime factor (sorted in a non-decreasing order) and $I_{M}$ representing the total number of prime factors. Then, we define $M_{1}=1$, $M_{i} = \prod_{j=1}^{i-1} m_{j}$, $2 \leq i \leq I_{M}$, and $\mathbf{m}=[M_{1}, M_{2}, \dots, M_{I_{M}}]^{\mathrm{T}}$. It is easy to verify that any positive integer $m$, $1 \leq m \leq M$, can be uniquely determined by the factorization coefficient vector $\mathbf{u}_{m} \in \mathbb{Z}^{I_{M} \times 1}$ as $m=\mathbf{u}_{m}^{\mathrm{T}}\mathbf{m}+1$, subject to $[\mathbf{u}_{m}]_{i}<m_{i}$, $1 \leq i \leq I_{M}$. Specifically, $\mathbf{u}_{m}$ is given by the integer quotients of successively dividing the remainder of number $(m-1)$ by each element in $\mathbf{m}$ (from back to front). For example, for $M=24=2\times 2 \times 2 \times 3$, we have $I_{M}=4$ and $\mathbf{m}=[1, 2, 4, 8]^{\mathrm{T}}$. Then, we can obtain $\mathbf{u}_{8}=[1, 1, 1, 0]^{\mathrm{T}}$, $\mathbf{u}_{23}=[0, 1, 1, 2]^{\mathrm{T}}$, and so on. Similarly, we denote the prime factorization of integer $N$ as $N=\prod_{i=1}^{I_{N}} n_{i}$, with $n_{i}$ being the $i$-th prime factor (sorted in a non-decreasing order) and $I_{N}$ representing the total number of prime factors. Then, we define $N_{1}=1$, $N_{i} = \prod_{j=1}^{i-1} n_{j}$, $2 \leq i \leq I_{N}$, and $\mathbf{n}=[N_{1}, N_{2}, \dots, N_{I_{N}}]^{\mathrm{T}}$. Any positive integer $n$, $1 \leq n \leq N$, can be uniquely determined by the factorization vector $\mathbf{v}_{n} \in \mathbb{Z}^{I_{N} \times 1}$ as $n=\mathbf{v}_{n}^{\mathrm{T}}\mathbf{n}+1$, subject to $[\mathbf{u}_{n}]_{i}<n_{i}$, $1 \leq i \leq I_{N}$, where $\mathbf{v}_{n}$ is given by the integer quotients of successively dividing the remainder of number $(n-1)$ by each element in $\mathbf{n}$ (from back to front). Moreover, we define $\mathbf{U}=[\mathbf{u}_{1}, \mathbf{u}_{2},\dots,\mathbf{u}_{M}] \in \mathbb{Z}^{I_{M} \times M}$ and $\mathbf{V}=[\mathbf{v}_{1}, \mathbf{v}_{2},\dots,\mathbf{v}_{N}] \in \mathbb{Z}^{I_{N} \times N}$.
\begin{theorem}\label{theo_APV}
	Under the condition of a single channel path for each user and $K(K-1)/2 \leq I_{M}+I_{N}$, optimal APVs for problem \eqref{eq_problem_ist} are given by
	\begin{equation}\label{eq_APV}
		\mathbf{x}^{\star}=\mathbf{U}^{\mathrm{T}}\mathbf{d}_{x},~\mathbf{y}^{\star}=\mathbf{V}^{\mathrm{T}}\mathbf{d}_{y},
	\end{equation}
	with 
	\begin{equation}\label{eq_dis_x}
		[\mathbf{d}_{x}]_{i}=\left\{
		\begin{aligned}
			&\frac{(\rho_{i}+1/m_{i})\lambda}{|\vartheta_{k_{1}^{(i)}}-\vartheta_{q_{1}^{(i)}}|},~1 \leq i \leq |\mathcal{P}_{1}|,\\
			&\sum_{j=1}^{i-1}(m_{j}-1)[\mathbf{d}_{x}]_{j}+d_{\min,x},~|\mathcal{P}_{1}| + 1 \leq i \leq I_{M},
		\end{aligned}
		\right.
	\end{equation}
	\begin{equation}\label{eq_dis_y}
		[\mathbf{d}_{y}]_{i}=\left\{
		\begin{aligned}
			&\frac{(\tau_{i}+1/n_{i})\lambda}{|\varphi_{k_{2}^{(i)}}-\varphi_{q_{2}^{(i)}}|},~1 \leq i \leq |\mathcal{P}_{2}|,\\
			&\sum_{j=1}^{i-1}(n_{j}-1)[\mathbf{d}_{y}]_{j}+d_{\min,y},~|\mathcal{P}_{2}| + 1 \leq i \leq I_{N},
		\end{aligned}
		\right.
	\end{equation}
	where $\rho_{i}$ is the minimum nonnegative integer ensuring $[\mathbf{d}_{x}]_{i} \geq \sum_{j=1}^{i-1}(m_{j}-1)[\mathbf{d}_{x}]_{j}+d_{\min,x}$, $1 \leq i \leq |\mathcal{P}_{1}|$, and $\tau_{i}$ is the minimum nonnegative integer ensuring $[\mathbf{d}_{y}]_{i} \geq \sum_{j=1}^{i-1}(n_{j}-1)[\mathbf{d}_{y}]_{j}+d_{\min,y}$, $1 \leq i \leq |\mathcal{P}_{2}|$.
\end{theorem}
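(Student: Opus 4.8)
The plan is to show that the explicit APVs in \eqref{eq_APV} are feasible for problem \eqref{eq_problem_ist} and drive the transmit power of every user down to the lower bound $\bar{p}_{k}$ of \eqref{eq_bound_power}, which by Theorem~\ref{theo_bound} certifies optimality. Since the moving-region constraints \eqref{eq_problem_ist_d} and \eqref{eq_problem_ist_f} are dropped in this analysis, feasibility reduces to verifying the minimum-spacing constraints \eqref{eq_problem_ist_e} and \eqref{eq_problem_ist_g}; and by the remarks following Theorem~\ref{theo_bound}, attaining $\bar{p}_{k}$ reduces to simultaneously meeting the MCP and CVO conditions (the MRC combiner $\mathbf{W}=\mathbf{H}$ with $p_{k}=\bar{p}_{k}$ then achieves the bound). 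I would therefore split the argument into three parts: (i) MCP, (ii) CVO, and (iii) spacing feasibility.

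First I would dispose of MCP, which is automatic under a single path. From \eqref{eq_channel_LoS}--\eqref{eq_steer}, $\hat{\mathbf{h}}_{k}$ equals $b_{k}$ times a Kronecker product of two unit-modulus steering vectors, so $\|\hat{\mathbf{h}}_{k}\|_{2}^{2}=MN|b_{k}|^{2}=MN\|\mathbf{b}_{k}\|_{1}^{2}$ for every $\mathbf{x},\mathbf{y}$; hence the entire burden falls on CVO. Next I would reduce CVO to a per-pair, per-direction orthogonality: by the Kronecker mixed-product property, $\hat{\mathbf{h}}_{k}^{\mathrm{H}}\hat{\mathbf{h}}_{q}=b_{k}^{*}b_{q}\,\big((\mathbf{a}_{k}^{\mathrm{hor}})^{\mathrm{H}}\mathbf{a}_{q}^{\mathrm{hor}}\big)\big((\mathbf{a}_{k}^{\mathrm{ver}})^{\mathrm{H}}\mathbf{a}_{q}^{\mathrm{ver}}\big)$, so CVO holds for a pair $(k,q)$ as soon as one of the two directional inner products vanishes.

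The heart of the proof is the factorization induced by the mixed-radix positions. Substituting $x_{m}=\mathbf{u}_{m}^{\mathrm{T}}\mathbf{d}_{x}$ and letting $\mathbf{u}_{m}$ sweep all digit tuples as $m=1,\dots,M$, the horizontal inner product factorizes over the prime slots as $(\mathbf{a}_{k}^{\mathrm{hor}})^{\mathrm{H}}\mathbf{a}_{q}^{\mathrm{hor}}=\prod_{i=1}^{I_{M}}\sum_{t=0}^{m_{i}-1}\e^{\jj\frac{2\pi}{\lambda}(\vartheta_{k}-\vartheta_{q})t[\mathbf{d}_{x}]_{i}}$, a product of $I_{M}$ geometric sums of lengths $m_{1},\dots,m_{I_{M}}$. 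For the pair $(k_{1}^{(i)},q_{1}^{(i)})\in\mathcal{P}_{1}$ assigned to slot $i$, the value in \eqref{eq_dis_x} makes the per-step phase equal $\pm 2\pi(\rho_{i}+1/m_{i})$, so the $i$-th geometric sum has ratio $\e^{\pm\jj 2\pi/m_{i}}$, a nontrivial $m_{i}$-th root of unity, and therefore vanishes, killing the whole product. The vertical direction together with \eqref{eq_dis_y} handles the pairs in $\mathcal{P}_{2}$ identically. Since Theorem~\ref{theo_bound_condition} supplies a partition with $|\mathcal{P}_{1}|\le I_{M}$ and $|\mathcal{P}_{2}|\le I_{N}$ covering all $K(K-1)/2$ pairs whenever $K(K-1)/2\le I_{M}+I_{N}$, every pair is orthogonalized in at least one direction, establishing CVO.

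Finally, for the spacing constraints I would exploit that the orthogonality requirement pins only the per-step phase of $[\mathbf{d}_{x}]_{i}$ modulo $2\pi$, leaving the nonnegative integer $\rho_{i}$ free to inflate its magnitude. Choosing $\rho_{i}$ as the smallest integer meeting the nested bound $[\mathbf{d}_{x}]_{i}\ge\sum_{j<i}(m_{j}-1)[\mathbf{d}_{x}]_{j}+d_{\min,x}$ (and setting the unused slots $i>|\mathcal{P}_{1}|$ to exactly this bound) makes each digit's place value exceed the total span of all lower digits by at least $d_{\min,x}$, after which a short induction shows the sorted positions $\{x_{m}\}$ have consecutive gaps $\ge d_{\min,x}$, i.e.\ \eqref{eq_problem_ist_e}, and symmetrically \eqref{eq_problem_ist_g}. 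I expect the main obstacle to be precisely this double duty of the single scalar $[\mathbf{d}_{x}]_{i}$: it must simultaneously pin the geometric-sum ratio to an $m_{i}$-th root of unity (for CVO) and be large enough to separate the antennas (for spacing), a tension that is resolved only through the integer freedom $\rho_{i}$; verifying that the factorization is clean additionally requires that the mixed-radix positions realize each digit combination exactly once.
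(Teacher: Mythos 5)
Your proposal is correct and follows essentially the same route as the paper's proof in Appendix C: reduce optimality to the CVO condition (MCP being automatic for a single path), exploit the bijection between antenna indices and mixed-radix digit tuples to factor each steering-vector inner product into geometric sums, observe that the slot assigned to a given user pair yields a vanishing sum over a nontrivial $m_i$-th (or $n_i$-th) root of unity, and verify the spacing constraints from the nested place-value inequalities on $[\mathbf{d}_x]_i$ and $[\mathbf{d}_y]_i$. The only cosmetic difference is that you write the inner product as a full product over all $I_M$ slots, whereas the paper peels off only the factors from slot $i$ upward, which suffices once the vanishing factor is extracted.
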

\begin{proof}
	See Appendix \ref{App_APV}.
\end{proof}

\begin{figure}[t]
	\begin{center}
		\includegraphics[width=5.5 cm]{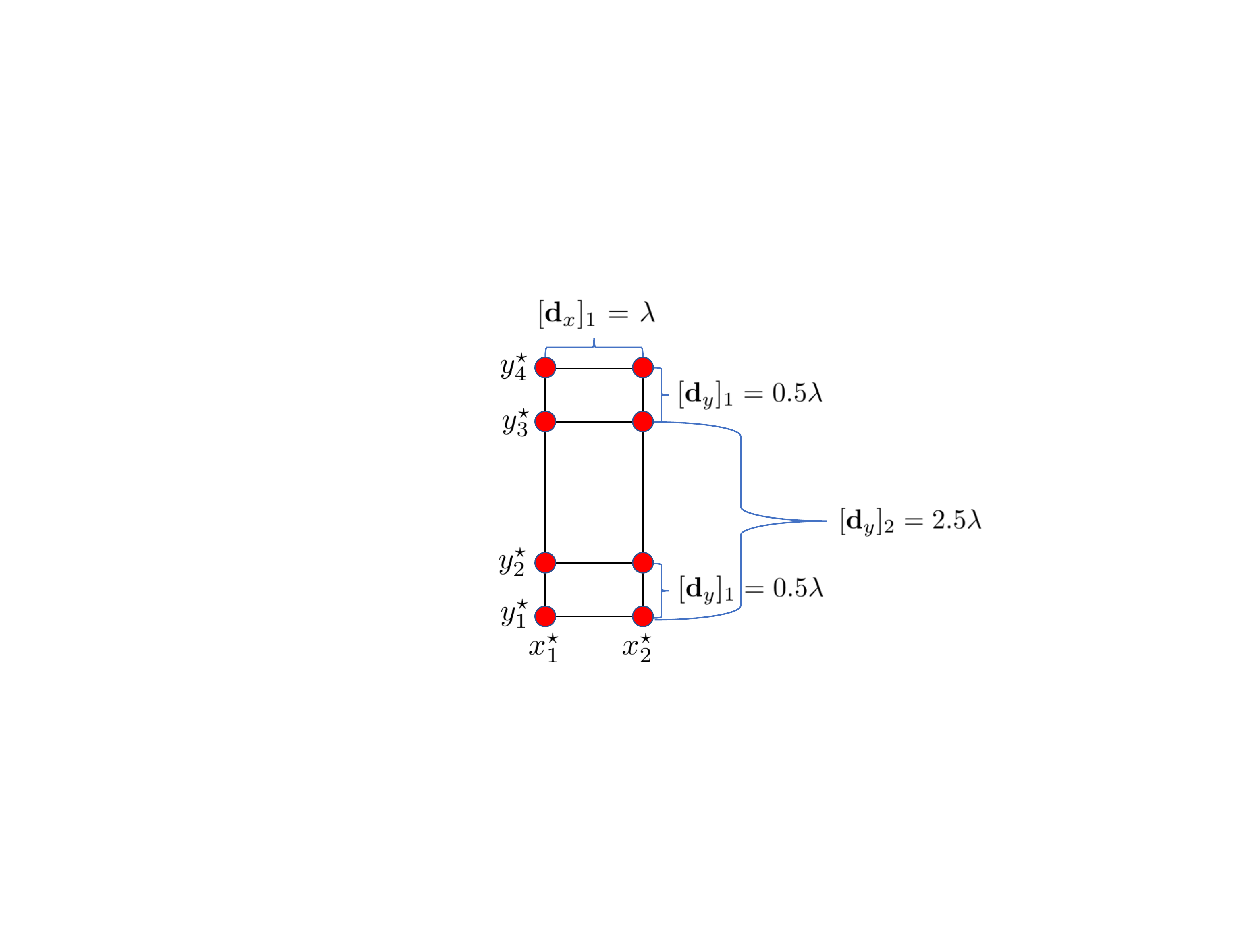}
		\caption{Illustration of the optimal APVs for $M \times N =2 \times 4$ and $K=3$.}
		\label{fig:optimalAPV}
	\end{center}
\end{figure}

The key idea for constructing the optimal APVs in Theorem \ref{theo_APV} is by ensuring the SVO condition for each pair of users sequentially, subject to the constraint on the minimum inter-antenna spacing over both horizontal and vertical directions. In Fig. \ref{fig:optimalAPV}, we show an example of the system with $M \times N =2 \times 4$ CL-MAs at the BS serving $K=3$ users, with their virtual AoAs given by $\vartheta_{1}=0.1,\varphi_{1}=-0.3$, $\vartheta_{2}=-0.4,\varphi_{2}=0.5$, $\vartheta_{3}=-0.2,\varphi_{3}=0.7$. The optimal APVs can be constructed as follows. Specifically, we set $\mathcal{P}_{1}=\{(1,2)\}$ and $\mathcal{P}_{2}=\{(1,3), (2,3)\}$. For the horizontal APV, the 1st column of MAs are deployed at $x_{1}^{\star}=0$. Then, the 2nd column of MAs are deployed at $x_{2}^{\star}=x_{1}^{\star}+[\mathbf{d}_{x}]_{1}=\lambda$ to guarantee the CVO condition for users 1 and 2. For the vertical APV, the 1st row of MAs are deployed at $y_{1}^{\star}=0$. Then, the 2nd row of MAs are deployed at $y_{2}^{\star}=y_{1}^{\star}+[\mathbf{d}_{y}]_{1}=0.5\lambda$ to guarantee the CVO condition for users 1 and 3. Finally, the 3rd and 4th rows of MAs are respectively deployed at $y_{3}^{\star}=y_{1}^{\star}+[\mathbf{d}_{y}]_{2}=2.5\lambda$ and $y_{4}^{\star}=y_{2}^{\star}+[\mathbf{d}_{y}]_{2}=3\lambda$ to guarantee the CVO for users 2 and 3. The CVO between any pair of two users is guaranteed by the constructed APVs, which are thus optimal to problem \eqref{eq_problem_ist} for achieving the lower bound on the total transmit power. As can be observed in Fig. \ref{fig:optimalAPV}, the size of the antenna moving region required to achieve the lower bound on the transmit power is only $\lambda \times 3\lambda$, demonstrating the ease of realization of the proposed optimal solution presented in Theorem \ref{theo_APV}.

Despite the lower bound on the total transmit power derived in \eqref{eq_bound_power} and the optimal solution given by Theorem \ref{theo_APV}, the results rely on the condition of a single channel path for each user. For the general scenario with multiple channel paths of users, it is challenging to derive the optimal APVs. In addition, if the number of users is large or the size of the antenna moving region is limited, the DoF in antenna position optimization may not suffice to achieve the performance bound in Theorem \ref{theo_bound}. Thus, a general algorithm is required to solve problem \eqref{eq_problem_ist} efficiently, which will be presented in the next section.

\section{Optimization Algorithm}
The main challenge for solving problem \eqref{eq_problem_ist} lies in that the optimization variables are highly coupled in the non-convex constraint \eqref{eq_problem_ist_b}. To address this problem, we start with the design of the combining matrix and the transmit power vector for any given APVs $\mathbf{x}$ and $\mathbf{y}$. Then, the optimization of APVs based on discrete approximation is presented.

\subsection{Optimization of Transmit Power and Receive Combining}
Inspired by the analytical results in Section III, the antenna position optimization can efficiently decrease the correlation of channel vectors for multiple users. Thus, the zero-forcing (ZF)-based combining can approach the performance of the optimal minimum mean square error (MMSE) receiver because the interference nulling does not result in significant loss of the desired signal power \cite{zhu2023MAarray}. For any given channel matrix $\mathbf{H}(\mathbf{x}, \mathbf{y})$, the ZF-based receive combining matrix is adopted as a suboptimal solution given by
\begin{equation}\label{eq_ZF}
	\begin{aligned}
		\mathbf{W}(\mathbf{x}, \mathbf{y}) =& \left(\mathbf{H}(\mathbf{x}, \mathbf{y})^{\mathrm{H}}\right)^{\dagger} 
		= \mathbf{H}(\mathbf{x}, \mathbf{y}) \left(\mathbf{H}(\mathbf{x}, \mathbf{y})^{\mathrm{H}}\mathbf{H}(\mathbf{x}, \mathbf{y})\right)^{-1},
	\end{aligned}
\end{equation}
which is expressed as a function of the APVs. Substituting \eqref{eq_ZF} into \eqref{eq_SINR}, the received SNR for user $k$ is obtained as
\begin{equation}\label{eq_SNR}
	\gamma_{k} = \frac{p_{k}}{\left\|\mathbf{H}(\mathbf{x}, \mathbf{y}) \left[\left(\mathbf{H}(\mathbf{x}, \mathbf{y})^{\mathrm{H}}\mathbf{H}(\mathbf{x}, \mathbf{y})\right)^{-1}\right]_{:,k}\right\|_{2}^{2}\sigma^{2}}.
\end{equation}
Then, the minimum transmit power of user $k$ can be obtained by substituting \eqref{eq_SNR} into \eqref{eq_problem_ist_b} as
\begin{equation}\label{eq_power_min}
	\begin{aligned}
		\tilde{p}_{k} = \left\|\mathbf{H}(\mathbf{x}, \mathbf{y}) \left[\left(\mathbf{H}(\mathbf{x}, \mathbf{y})^{\mathrm{H}}\mathbf{H}(\mathbf{x}, \mathbf{y})\right)^{-1}\right]_{:,k}\right\|_{2}^{2} \sigma^{2}(2^{r_{k}-1}).
	\end{aligned}
\end{equation}
Denoting $\mathbf{\Omega} = \mathrm{diag}\{\sigma^{2}(2^{r_{1}-1}), \dots, \sigma^{2}(2^{r_{K}-1})\}$, the minimum total transmit power of all the $K$ users under the ZF combining is thus given by
\begin{equation}\label{eq_total_power_min}{\small
	\begin{aligned}
		&\sum \limits_{k=1}^{K} \tilde{p}_{k} 
		= \sum \limits_{k=1}^{K} \left\|\mathbf{H}(\mathbf{x}, \mathbf{y}) \left[\left(\mathbf{H}(\mathbf{x}, \mathbf{y})^{\mathrm{H}}\mathbf{H}(\mathbf{x}, \mathbf{y})\right)^{-1}\right]_{:,k}\right\|_{2}^{2} [\mathbf{\Omega}]_{k,k}\\
		= &\left\|\mathbf{H}(\mathbf{x}, \mathbf{y}) \left(\mathbf{H}(\mathbf{x}, \mathbf{y})^{\mathrm{H}}\mathbf{H}(\mathbf{x}, \mathbf{y})\right)^{-1}\mathbf{\Omega}^{\frac{1}{2}}\right\|_{\mathrm{F}}^{2} \\
		= & \mathrm{tr} \left\{ \left(\mathbf{H}(\mathbf{x}, \mathbf{y})^{\mathrm{H}}\mathbf{H}(\mathbf{x}, \mathbf{y})\right)^{-1}\mathbf{\Omega} \right\} \triangleq \tilde{P}(\mathbf{x}, \mathbf{y}),
	\end{aligned}}
\end{equation}
which is determined by the APVs $\mathbf{x}$ and $\mathbf{y}$ only.

\subsection{Optimization of APVs}
With the closed-form solution for the receive combining and transmit power in \eqref{eq_ZF} and \eqref{eq_total_power_min}, the antenna position optimization problem can be simplified as
\begin{equation}\label{eq_problem_APV}
	\begin{aligned}
		\mathop{\min} \limits_{\mathbf{x}, \mathbf{y}} ~ &\tilde{P}(\mathbf{x}, \mathbf{y})\\
		\mathrm{s.t.}~~  &\eqref{eq_problem_ist_d},~\eqref{eq_problem_ist_e},~\eqref{eq_problem_ist_f}, ~\eqref{eq_problem_ist_g}.~
	\end{aligned}
\end{equation}

It is challenging to solve problem \eqref{eq_problem_APV} optimally because of the highly nonlinear form of the channel vector for each user w.r.t. the APVs shown in \eqref{eq_channel}. To address this issue, we adopt the discrete position optimization for the considered CL-MA systems. The key idea is to discretize the entire antenna moving region into multiple candidate positions, while each row/column of MAs should be selected from these candidate positions. If the discretization accuracy is sufficiently high, the performance of discrete position optimization can closely approach that of the continuous position optimization \cite{mei2024movable,wu2024globallyMA}. In addition, it is worth noting that the approaches of discrete position optimization in \cite{mei2024movable} and \cite{wu2024globallyMA} are tailored for MA systems with independent movement for each antenna element, which are not applicable to our considered CL-MA system with collective movement for each row/column of antennas. Thus, leveraging the unique structure of the CL-MA array, we develop a low-complexity algorithm for discrete antenna position optimization.
To this end, the horizontal region $[0, x_{\max}]$ and the vertical region $[0, y_{\max}]$ are uniformly discretized into $\bar{M}$ and $\bar{N}$ positions, respectively, which are given by
\begin{equation}\label{eq_pos_candi}
	\begin{aligned}
		&\bar{\mathbf{x}}=\left[\bar{x}_{1}, \bar{x}_{2}, \dots, \bar{x}_{\bar{M}}\right]^{\mathrm{T}},~
		\bar{\mathbf{y}}=\left[\bar{y}_{1}, \bar{y}_{2}, \dots, \bar{y}_{\bar{N}}\right]^{\mathrm{T}},
	\end{aligned}
\end{equation}
with $\bar{x}_{m}=\frac{(m-1/2)x_{\max}}{\bar{M}}$, $1 \leq m \leq \bar{M}$, and $\bar{y}_{n}=\frac{(n-1/2)y_{\max}}{\bar{N}}$, $1 \leq n \leq \bar{N}$. Then, we can obtain the channel vector of all candidate positions in the entire antenna moving region at the BS from user $k$ as
\begin{equation}\label{eq_channel_candi}
	\bar{\mathbf{h}}_{k} = (\bar{\mathbf{F}}_{k}^{\mathrm{hor}})^{\mathrm{H}} \odot (\bar{\mathbf{F}}_{k}^{\mathrm{ver}})^{\mathrm{H}} \mathbf{b}_{k} \in \mathbb{C}^{\bar{M}\bar{N} \times 1},~1 \leq k \leq K,
\end{equation}
with $\bar{\mathbf{F}}_{k}^{\mathrm{hor}} = [\mathbf{f}_{k}^{\mathrm{hor}}(\bar{x}_{1}), \mathbf{f}_{k}^{\mathrm{hor}}(\bar{x}_{2}), \dots, \mathbf{f}_{k}^{\mathrm{hor}}(\bar{x}_{\bar{M}})] \in \mathbb{C}^{L_{k} \times \bar{M}}$ and $\bar{\mathbf{F}}_{k}^{\mathrm{ver}} = [\mathbf{f}_{k}^{\mathrm{ver}}(\bar{y}_{1}), \mathbf{f}_{k}^{\mathrm{ver}}(\bar{y}_{2}), \dots, \mathbf{f}_{k}^{\mathrm{ver}}(\bar{y}_{\bar{N}})] \in \mathbb{C}^{L_{k} \times \bar{N}}$ representing the horizontal and vertical FRMs for all candidate positions, respectively. Note that $\bar{\mathbf{h}}_{k}$ is a constant vector for any given discrete positions. For convenience, we denote $\bar{\mathbf{H}}=[\bar{\mathbf{h}}_{1}, \bar{\mathbf{h}}_{2}, \dots, \bar{\mathbf{h}}_{K}] \in \mathbb{C}^{\bar{M}\bar{N} \times K}$ as the channel matrix between all candidate positions of the CL-MA array at the BS and all users.

To facilitate discrete position optimization, we define two binary matrices, $\mathbf{B}_{x} \in \mathbb{Z}^{M \times \bar{M}}$ and $\mathbf{B}_{y} \in \mathbb{Z}^{N \times \bar{N}}$, to indicate the selection of antenna positions over the horizontal and vertical directions, respectively. The $m$-th row, $1 \leq m \leq M$, of $\mathbf{B}_{x}$ has and only has one entry equal to $1$, which corresponds to the horizontal position of the $m$-th column of MAs. Similarly, the $n$-th row, $1 \leq n \leq N$, of $\mathbf{B}_{y}$ has and only has one entry equal to $1$, which corresponds to the vertical position of the $n$-th row of MAs. Therefore, the horizontal and vertical APVs can be determined by the binary selection matrices as $\mathbf{x} = \mathbf{B}_{x} \bar{\mathbf{x}}$ and $\mathbf{y} = \mathbf{B}_{y} \bar{\mathbf{y}}$, respectively. Meanwhile, the channel vector from user $k$ to all the MAs at the BS in \eqref{eq_channel} can be recast into
\begin{equation}\label{eq_channel_sele}
	\begin{aligned}
		&\mathbf{h}_{k}(\mathbf{B}_{x}\bar{\mathbf{x}}, \mathbf{B}_{y}\bar{\mathbf{y}})=(\mathbf{B}_{x} \bar{\mathbf{F}}_{k}^{\mathrm{hor}})^{\mathrm{H}} \odot (\mathbf{B}_{y} \bar{\mathbf{F}}_{k}^{\mathrm{ver}})^{\mathrm{H}} \mathbf{b}_{k}\\
		= &(\mathbf{B}_{x} \otimes \mathbf{B}_{y}) \left((\bar{\mathbf{F}}_{k}^{\mathrm{hor}})^{\mathrm{H}} \odot (\bar{\mathbf{F}}_{k}^{\mathrm{ver}})^{\mathrm{H}}\right) \mathbf{b}_{k} \\
		= &(\mathbf{B}_{x} \otimes \mathbf{B}_{y}) \bar{\mathbf{h}}_{k},~1 \leq k \leq K.
	\end{aligned}
\end{equation}
Therefore, the channel matrix between the CL-MA array at the BS and all users can be given by $(\mathbf{B}_{x} \otimes \mathbf{B}_{y}) \bar{\mathbf{H}} \in \mathbb{C}^{MN \times K}$.

Based on the above discretization, problem \eqref{eq_problem_APV} for antenna position optimization can be simplified as
\begin{subequations}\label{eq_problem_dis}
	\begin{align}
		\mathop{\min} \limits_{\mathbf{B}_{x}, \mathbf{B}_{y}}
		&P(\mathbf{B}_{x}, \mathbf{B}_{y}) \label{eq_problem_dis_a}\\
		\mathrm{s.t.}~  &\left[\mathbf{B}_{x}\right]_{m,\bar{m}} \in \{0,1\},~1 \leq m \leq M,~1 \leq \bar{m} \leq \bar{M}, \label{eq_problem_dis_b}\\
		&\sum \limits_{\bar{m}=1}^{\bar{M}}\left[\mathbf{B}_{x}\right]_{m,\bar{m}} = 1,~1 \leq m \leq M, \label{eq_problem_dis_c}\\
		&\left[\mathbf{B}_{x} \bar{\mathbf{x}}\right]_{m+1} - \left[\mathbf{B}_{x} \bar{\mathbf{x}}\right]_{m} \geq d_{\min,x},~1 \leq m \leq M-1, \label{eq_problem_dis_d}\\
		&\left[\mathbf{B}_{y}\right]_{n,\bar{n}} \in \{0,1\},~1 \leq n \leq N,~1 \leq \bar{n} \leq \bar{N}, \label{eq_problem_dis_e}\\
		&\sum \limits_{\bar{n}=1}^{\bar{N}}\left[\mathbf{B}_{y}\right]_{n,\bar{n}} = 1,~1 \leq n \leq N, \label{eq_problem_dis_f}\\
		&\left[\mathbf{B}_{y} \bar{\mathbf{y}}\right]_{n+1} - \left[\mathbf{B}_{y} \bar{\mathbf{y}}\right]_{n} \geq d_{\min,y},~1 \leq n \leq N-1, \label{eq_problem_dis_g}
	\end{align}
\end{subequations}
where $P(\mathbf{B}_{x}, \mathbf{B}_{y}) \triangleq \tilde{P}(\mathbf{B}_{x}\bar{\mathbf{x}}, \mathbf{B}_{y}\bar{\mathbf{y}})$ denotes the total transmit power as a function of the binary selection matrices. Different from conventional antenna selection (AS) system, problem \eqref{eq_problem_dis} involves the constraints \eqref{eq_problem_dis_d} and \eqref{eq_problem_dis_g} on the minimum inter-antenna spacing.
For this nonlinear combinatorial optimization problem, it generally requires an exponential complexity to obtain a globally optimal solution, which is computationally prohibitive for large $M$ and $\bar{M}$ (or $N$ and $\bar{N}$). Next, we develop a low-complexity iterative algorithm to obtain suboptimal solutions for problem \eqref{eq_problem_dis}. The key idea of the proposed algorithm is to conduct greedy search for the position of each row/column of the antennas, which consists of two phases: \emph{sequential elimination} and \emph{successive refinement}.

\begin{algorithm}[t]\footnotesize
	\caption{Proposed solution for problem \eqref{eq_problem_dis}.}
	\label{alg_APV}
	\begin{algorithmic}[1]
		\REQUIRE ~$M$, $N$, $\bar{M}$, $\bar{N}$, $\bar{\mathbf{H}}$, $\sigma^{2}$, $\{r_{k}\}$, $d_{\min,x}$, $d_{\min,y}$, $x_{\max}$, $y_{\max}$, $\epsilon$.
		\ENSURE ~$\mathbf{B}_{x}$ and $\mathbf{B}_{y}$. \\
		\STATE Initialize $\mathbf{B}_{x} \leftarrow \mathbf{I}_{\bar{M}}$ and $\mathbf{B}_{y} \leftarrow \mathbf{I}_{\bar{N}}$.  \% Sequential elimination
		\FOR   {$i=1:\max\{\bar{M}-M, \bar{N}-N\}$}
		\IF		{$i \leq \bar{M}-M$}
		\STATE Set $P_{\min} \leftarrow P(\mathbf{B}_{x}, \mathbf{B}_{y})$ according to \eqref{eq_total_power_min}.
		\STATE Set $\mathbf{B}_{x}^{i} \leftarrow \mathbf{B}_{x}$. \% of dimension $(\bar{M}-i+1) \times \bar{M}$
		\FOR	{$j = 1:\bar{M}-i+1$}
		\STATE Set $\mathbf{B}_{x}^{\mathrm{eli}}$ as a submatrix of $\mathbf{B}_{x}^{i}$ with the $j$-th row eliminated.
		\IF		{$P(\mathbf{B}_{x}^{\mathrm{eli}}, \mathbf{B}_{y}) < P_{\min}$}
		\STATE Update $\mathbf{B}_{x} \leftarrow \mathbf{B}_{x}^{\mathrm{eli}}$.
		\STATE Update $P_{\min} \leftarrow P(\mathbf{B}_{x}^{\mathrm{eli}}, \mathbf{B}_{y})$.
		\ENDIF
		\ENDFOR
		\ENDIF
		\IF		{$i \leq \bar{N}-N$}
		\STATE Set $P_{\min} \leftarrow P(\mathbf{B}_{x}, \mathbf{B}_{y})$ according to \eqref{eq_total_power_min}.
		\STATE Set $\mathbf{B}_{y}^{i} \leftarrow \mathbf{B}_{y}$. \% of dimension $(\bar{N}-i+1) \times \bar{N}$
		\FOR	{$j = 1:\bar{N}-i+1$}
		\STATE Set $\mathbf{B}_{y}^{\mathrm{eli}}$ as a submatrix of $\mathbf{B}_{y}^{i}$ with the $j$-th row eliminated.
		\IF		{$P(\mathbf{B}_{x}, \mathbf{B}_{y}^{\mathrm{eli}}) < P_{\min}$}
		\STATE Update $\mathbf{B}_{y} \leftarrow \mathbf{B}_{y}^{\mathrm{eli}}$.
		\STATE Update $P_{\min} \leftarrow P(\mathbf{B}_{x}, \mathbf{B}_{y}^{\mathrm{eli}})$.
		\ENDIF
		\ENDFOR
		\ENDIF
		\ENDFOR
		\STATE Set $\hat{P}_{\min} \leftarrow +\inf$.  \% Successive refinement
		\WHILE {$|P_{\min} - \hat{P}_{\min}| < \epsilon$} 
		\STATE Set $\hat{P}_{\min} \leftarrow P_{\min}$.
		\FOR   {$i=1:\max\{M, N\}$}
		\IF		{$i \leq M$}
		\STATE Set $P_{\min} \leftarrow +\inf$.
		\STATE Set $\mathbf{B}_{x}^{i} \leftarrow \mathbf{B}_{x}$. \% of dimension $M \times \bar{M}$
		\FOR	{$j = 1:\bar{M}$}
		\STATE Set $\mathbf{B}_{x}^{\mathrm{ref}}$ as $\mathbf{B}_{x}^{i}$ with its $i$-th row replaced by the $j$-th row of $\mathbf{I}_{\bar{M}}$.
		\IF		{$P(\mathbf{B}_{x}^{\mathrm{ref}}, \mathbf{B}_{y}) < P_{\min}$ and \eqref{eq_problem_dis_d} holds for $m=i$}
		\STATE Update $\mathbf{B}_{x} \leftarrow \mathbf{B}_{x}^{\mathrm{ref}}$.
		\STATE Update $P_{\min} \leftarrow P(\mathbf{B}_{x}^{\mathrm{ref}}, \mathbf{B}_{y})$.
		\ENDIF
		\ENDFOR
		\ENDIF
		\IF		{$i \leq N$}
		\STATE Set $P_{\min} \leftarrow +\inf$.
		\STATE Set $\mathbf{B}_{y}^{i} \leftarrow \mathbf{B}_{y}$. \% of dimension $N \times \bar{N}$
		\FOR	{$j = 1:\bar{N}$}
		\STATE Set $\mathbf{B}_{y}^{\mathrm{ref}}$ as $\mathbf{B}_{y}^{i}$ with its $i$-th row replaced by the $j$-th row of $\mathbf{I}_{\bar{N}}$.
		\IF		{$P(\mathbf{B}_{x}, \mathbf{B}_{y}^{\mathrm{ref}}) < P_{\min}$ and \eqref{eq_problem_dis_g} holds for $n=i$}
		\STATE Update $\mathbf{B}_{y} \leftarrow \mathbf{B}_{y}^{\mathrm{ref}}$.
		\STATE Update $P_{\min} \leftarrow P(\mathbf{B}_{x}, \mathbf{B}_{y}^{\mathrm{ref}})$.
		\ENDIF
		\ENDFOR
		\ENDIF
		\ENDFOR
		\ENDWHILE
		\RETURN $\mathbf{B}_{x}$ and $\mathbf{B}_{y}$.
	\end{algorithmic}
\end{algorithm}

As shown in Algorithm \ref{alg_APV}, we initialize $\mathbf{B}_{x}$ and $\mathbf{B}_{y}$ as identity matrices $\mathbf{I}_{\bar{M}}$ and $\mathbf{I}_{\bar{N}}$, respectively. The motivation of employing such an initialization and sequential elimination is to improve the global search efficiency for antenna position optimization, while ensuring the invertibility of $\mathbf{H}(\mathbf{B}_{x}\bar{\mathbf{x}}, \mathbf{B}_{y}\bar{\mathbf{y}})^{\mathrm{H}}\mathbf{H}(\mathbf{B}_{x}\bar{\mathbf{x}}, \mathbf{B}_{y}\bar{\mathbf{y}})$ for calculating the total transmit power in \eqref{eq_total_power_min} during the iterations. Note that the initialized $\mathbf{B}_{x}$ and $\mathbf{B}_{y}$ are not feasible to problem \eqref{eq_problem_dis}. To obtain feasible solutions, we should eliminate $(\bar{M}-M)$ rows from $\mathbf{B}_{x}$ and $(\bar{N}-N)$ rows from $\mathbf{B}_{y}$, respectively. From the physical point of view, $\mathbf{B}_{x}=\mathbf{I}_{\bar{M}}$ and $\mathbf{B}_{y}=\mathbf{I}_{\bar{N}}$ can be interpreted as a virtual array where all candidate positions are deployed with antennas, while we need to remove $(\bar{M}-M)$ columns and $(\bar{N}-N)$ rows of antennas such that only $M \times N$ antennas are present. To this end, we sequentially eliminate one row from $\mathbf{B}_{x}$ and $\mathbf{B}_{y}$ until only $M$ and $N$ rows remain, respectively. Since such an elimination process reduces the number of virtual antennas after each step, the total transmit power of users generally increases over the iterations. Thus, the greedy mechanism is further employed to guarantee the smallest increment in the objective function for each elimination, which are conducted in lines 3-13 for $\mathbf{B}_{x}$'s optimization and in lines 14-24 for $\mathbf{B}_{y}$'s optimization, respectively.

Note that the above greedy elimination may not guarantee the inter-antenna spacing constraints \eqref{eq_problem_dis_d} and \eqref{eq_problem_dis_g}. Besides, the objective function may be further decreased by adjusting the antenna positions. Thus, we further successively refine each row of the binary selection matrices with the other rows being fixed. In particular, the solution for the optimized row of $\mathbf{B}_{x}$ or $\mathbf{B}_{y}$ is selected from the candidate ones which yields the minimum value of the objective function while satisfying the inter-antenna spacing constraint with other antennas. We iteratively update $\mathbf{B}_{x}$ in lines 30-40 and $\mathbf{B}_{y}$ in lines 41-51 until the change in the objective function's value falls below a predefined threshold $\epsilon$. 

The convergence of Algorithm \ref{alg_APV} is guaranteed by the above greedy search over finite discrete solutions. Note that the objective function $P(\mathbf{B}_{x}, \mathbf{B}_{y})$ has a closed-form expression shown in \eqref{eq_total_power_min}. To further reduce the computational complexity during the iterations, we calculate it in line 8 using the following equations:
\begin{equation}
	\begin{aligned}
		& P(\mathbf{B}_{x}^{\mathrm{eli}}, \mathbf{B}_{y}) = \tilde{P}(\mathbf{B}_{x}^{\mathrm{eli}}\bar{\mathbf{x}}, \mathbf{B}_{y}\bar{\mathbf{y}})\\
		=&\mathrm{tr} \left\{ \left(\bar{\mathbf{H}}^{\mathrm{H}} (\mathbf{B}_{x}^{\mathrm{eli}} \otimes \mathbf{B}_{y})^{\mathrm{H}} (\mathbf{B}_{x}^{\mathrm{eli}} \otimes \mathbf{B}_{y}) \bar{\mathbf{H}} \right)^{-1}\mathbf{\Omega} \right\} \\
		=&\mathrm{tr} \left\{ \left(\mathbf{Z}_{x} - \bar{\mathbf{H}}^{\mathrm{H}} (\mathbf{b}_{x}^{i,j} \otimes \mathbf{B}_{y})^{\mathrm{H}} (\mathbf{b}_{x}^{i,j} \otimes \mathbf{B}_{y}) \bar{\mathbf{H}} \right)^{-1}\mathbf{\Omega} \right\},
	\end{aligned}
\end{equation}
where $\mathbf{Z}_{x}^{i} = \bar{\mathbf{H}}^{\mathrm{H}} (\mathbf{B}_{x}^{i} \otimes \mathbf{B}_{y})^{\mathrm{H}} (\mathbf{B}_{x}^{i} \otimes \mathbf{B}_{y}) \bar{\mathbf{H}}$ is a constant matrix for a given iteration index $i$ and $\mathbf{b}_{x}^{i,j}$ denotes the $j$-th row of $\mathbf{B}_{x}^{i}$. The calculation of $P(\mathbf{B}_{x}, \mathbf{B}_{y}^{\mathrm{eli}})$, $P(\mathbf{B}_{x}^{\mathrm{ref}}, \mathbf{B}_{y})$, and $P(\mathbf{B}_{x}, \mathbf{B}_{y}^{\mathrm{ref}})$ in lines 19, 35, and 46 can be conducted in a similar way. Then, the main algorithm complexity is determined by the total number of iterations in Algorithm \ref{alg_APV}. In lines 3-13 and lines 14-24, the total numbers of times for computing $P(\mathbf{B}_{x}, \mathbf{B}_{y})$ are in the order of $\mathcal{O}(\bar{M}^{2})$ and $\mathcal{O}(\bar{N}^{2})$, respectively. Denoting the maximum number of iterations for successive refinement as $T$, the total number of times for computing $P(\mathbf{B}_{x}, \mathbf{B}_{y})$ in lines 27-53 is in the order of $\mathcal{O}(T\bar{M}M+T\bar{N}N)$. Given the complexity of computing $P(\mathbf{B}_{x}, \mathbf{B}_{y})$ each time via matrix inversion as $\mathcal{O}(K^{3})$, the total computational complexity of Algorithm \ref{alg_APV} is thus $\mathcal{O}(K^{3}(\bar{M}^{2}+\bar{N}^{2}+T\bar{M}M+T\bar{N}N))$.

\section{Statistical Channel-based APV Optimization}
The antenna movement optimization in the previous section is based on instantaneous channels between the BS and users. In practice, as the users and/or environmental scatterers move, the AoAs and coefficients for the channel paths from each user to the BS may vary over time. Thus, the instantaneous channel vector for each user in \eqref{eq_channel} may exhibit random variation over time. In these scenarios, frequent antenna repositioning based on instantaneous channels results in high movement overhead and energy consumption, especially for high-mobility users undergoing fast-fading channels. 

To address this issue, we propose a two-timescale optimization strategy for the considered CL-MA array aided system. Specifically, the APVs are optimized based on the knowledge of statistical channels, which remain unchanged over a long time period unless the channel statistics/user distributions change. Besides, the transmit power of users and the receive combining matrix at the BS are designed based on instantaneous channels, which are similar to that in conventional multiuser communication systems aided by FPAs. Accordingly, the two-timescale optimization problem to minimize the expected transmit power of users can be formulated as
\begin{subequations}\label{eq_problem_stc}
	\begin{align}
		\mathop{\min} \limits_{\mathbf{x}, \mathbf{y}}~
		&\mathbb{E}_{\mathbf{H}}\left\{\mathop{\min} \limits_{\mathbf{p}, \mathbf{W}} ~ \sum \limits_{k=1}^{K} p_{k}\right\} \label{eq_problem_stc_a}\\
		\mathrm{s.t.}~~  &\eqref{eq_problem_ist_b},\eqref{eq_problem_ist_c},\eqref{eq_problem_ist_d},\eqref{eq_problem_ist_e},\eqref{eq_problem_ist_f},\eqref{eq_problem_ist_g}.
	\end{align}
\end{subequations}
Note that the statistical channel-based antenna position optimization is conducted offline to obtain APVs. In practice, once the CL-MA array has been configured/moved to the optimized positions, the parameter optimization (for $\mathbf{p}$ and $\mathbf{W}$) can be conducted in real time based on users' instantaneous channels (see Section IV-A).

\subsection{Performance Analysis}
Before solving problem \eqref{eq_problem_stc}, we analyze the lower bound on the expectation of the total transmit power in \eqref{eq_problem_stc_a}. According to Theorem \ref{theo_bound}, we have 
\begin{equation}\label{eq_power_bound_exp}
	\begin{aligned}
		&\mathop{\min} \limits_{\mathbf{x}, \mathbf{y}}~
		\mathbb{E}_{\mathbf{H}}\left\{\mathop{\min} \limits_{\mathbf{p}, \mathbf{W}} ~ \sum_{k=1}^{K} p_{k}\right\}\\
		&~~~~\geq \mathbb{E}_{\mathbf{H}}\left\{\mathop{\min} \limits_{\mathbf{x}, \mathbf{y}, \mathbf{p}, \mathbf{W}} ~ \sum_{k=1}^{K} p_{k}\right\} \geq \mathbb{E}_{\mathbf{H}}\left\{\sum_{k=1}^{K} \bar{p}_{k}\right\},
	\end{aligned}
\end{equation}
where $\bar{p}_{k}$ is the lower bound on the transmit power for user $k$ shown in \eqref{eq_bound_power}.

Next, we resort to analyzing the expectation of the lower bound on the total transmit power. Similar to Section III, we consider the case that the channel for each user is dominated by an LoS path. The amplitude of the LoS path coefficient is given by $|b_{k}|=\frac{\lambda}{4 \pi D_{k}^{\alpha/2}}$, $1 \leq k \leq K$, with $\alpha$ denoting the path loss exponent. $D_{k}$ is the distance between the BS and user $k$, which can be regarded as a random variable that impacts the channel gain as each user moves within the coverage sector of the BS. Thus, the expectation of $\bar{p}_{k}$ in \eqref{eq_bound_power} is given by 
\begin{equation}\label{eq_power_bound_expect}
	\begin{aligned}
		&\mathbb{E}_{\mathbf{H}}\{\bar{p}_{k}\} = \mathbb{E}_{D_{k}} \left\{ \frac{\sigma^{2}(2^{r_{k}-1})}{MN|b_{k}|^{2}}\right\} \\
		=&\mathbb{E}_{D_{k}} \left\{D_{k}^{\alpha}\right\} \frac{16 \pi^{2}  \sigma^{2}(2^{r_{k}-1})}{MN\lambda^{2}},~1 \leq k \leq K.
	\end{aligned}
\end{equation}
Since multiple users are randomly distributed and move independently in the site, it is reasonable to assume $\{D_{k}\}_{k=1}^{K}$ as independent and identically distributed (i.i.d.) random variables. For notation simplicity, we denote $D$ as the random variable representing the distance between the BS and a user, which follows the identical distribution as $D_{k}$, $1 \leq k \leq K$. As a result, the expectation of the lower bound in \eqref{eq_power_bound_exp} is given by
\begin{equation}\label{eq_total_power_expect}
	\begin{aligned}
		&\mathbb{E}_{\mathbf{H}}\left\{ \sum \limits_{k=1}^{K} \bar{p}_{k}\right\} = \mathbb{E}_{D} \left\{D^{\alpha}\right\} \sum_{k=1}^{K} \frac{16 \pi^{2}  \sigma^{2}(2^{r_{k}-1})}{MN\lambda^{2}},
	\end{aligned}
\end{equation}
which decreases with the total number of antenna $MN$, but increases with the expectation of the powered distance $\mathbb{E}_{D} \left\{D^{\alpha}\right\}$, the number of users $K$, and the rate requirement for each user $r_{k}$. The average distance between the BS and user depends on both the deployment of the BS and the distribution of users. For example, assume that the CL-MA array of the BS is fixed at height $H$, while the users are randomly distributed within a sector on the ground following 2D uniform distribution. Denote the maximal and minimal 2D distance between the BS and user as $D_{\max}$ and $D_{\min}$, respectively. It is easy to show that $D^{2}$ follows uniform distribution within interval $[H^{2}+D_{\min}^{2}, H^{2}+D_{\max}^{2}]$. For free-space propagation loss with $\alpha=2$, we have $\mathbb{E}_{D} \left\{D^{2}\right\} = H^{2} + (D_{\min}^{2} + D_{\max}^{2})/2$.

\subsection{Optimization Algorithm}
According to Theorem \ref{theo_APV}, the optimal APVs for achieving the lower bound on the total transmit power vary with different instantaneous channels. Thus, it is generally difficult to achieve the lower bound in \eqref{eq_total_power_expect} via antenna position optimization based on the knowledge of statistical channels. In view of this, we should develop a dedicated algorithm to solve problem \eqref{eq_problem_stc}.
Different from the instantaneous channel-based design in \eqref{eq_problem_ist}, the expectation of the transmit power in \eqref{eq_problem_stc_a} does not have an explicit expression, which results in the APV optimization outside the expectation intractable. To facilitate antenna position optimization, we resort to approximating the expectation of the total transmit power by the method of Monte Carlo simulation. 

Specifically, a sufficiently large number of instantaneous channels are randomly generated based on the given statistical channel distribution of users. Denote the $s$-th random realization of instantaneous channel mapping as $\mathbf{H}^{(s)}$, $1 \leq s \leq S$, with $S$ being the total number of realizations. For each instantaneous channel realization, we can also adopt the ZF-based receive combining in \eqref{eq_ZF} and transmit power in \eqref{eq_power_min}, where the corresponding total transmit power in \eqref{eq_total_power_min} for the $s$-th channel realization can be obtained as $\tilde{P}(\mathbf{x}, \mathbf{y}; \mathbf{H}^{(s)})$.
Then, the expectation is approximated by the average value of the total transmit power of users over all $S$ instantaneous channel realizations as
\begin{equation}\label{eq_Monte_Carlo}
	\mathbb{E}_{\mathbf{H}} \left\{\tilde{P}(\mathbf{x}, \mathbf{y}; \mathbf{H})\right\}
	\approx \frac{1}{S} \sum \limits_{s=1}^{S} \tilde{P}(\mathbf{x}, \mathbf{y}; \mathbf{H}^{(s)}).
\end{equation}
For sufficiently large $S$, the statistical information of wireless channels between the BS and users, e.g., the distributions of users and scatterers in the site, can be effectively captured by \eqref{eq_Monte_Carlo}. As such, \eqref{eq_Monte_Carlo} can provide a good approximation of the objective function with the expectation over channel distributions. Based on the above derivation, problem \eqref{eq_problem_stc} can be efficiently solved by Algorithm \ref{alg_APV}, whereas the objective function is replaced with \eqref{eq_Monte_Carlo}.

\section{Performance Evaluation}
This section presents the simulation results to evaluate the performance of our proposed CL-MA array for minimizing the transmit power of users. The size of the CL-MA array is $M \times N = 6 \times 6$, which is installed on the BS at the altitude of $10$ meters (m). The region size for antenna movement is $20\lambda \times 20\lambda$, which is discretized by a resolution of $\lambda/4$ in Algorithm \ref{alg_APV}. The BS covers a sector spanned of $120^{\circ}$ azimuth angle. In the sector, three cubical buildings are assumed to be located at $[15, 0, 30]$, $[0, 0, 40]$, and $[-15, 0, 30]$ m, with the height of $30$ m and the bottom side length of $10$ m. The total number of users is $K=18$. In particular, half of the users are randomly distributed on the ground with the distance to the BS ranging from $5$ to $50$ m. The other half of users are randomly distributed in the buildings. The carrier frequency is $30$ GHz. The LoS channel between the BS and each user is considered with the amplitude of the path coefficient given by $|b_{k}|=\frac{\lambda}{4\pi D_{k}}$, where $D_{k}$ is the distance between the BS and user $k$. The noise power is $-80$ dBm and the minimum rate requirement for each user is $r_{k}=3$ bps/Hz. The total number of Monte Carlo simulations for user locations and instantaneous channel realizations is $S=1000$.

In addition to the proposed CL-MA designs based on instantaneous and statistical channels. Three benchmark schemes defined as follows are considered for performance comparison. The ``element-wise MA'' scheme performs independent antenna movement within the 2D rectangular region, where the antenna positions are optimized by element-wise sequential elimination and successive refinement similar to that in Algorithm \ref{alg_APV}. The ``dense UPA'' scheme adopts uniform planar array (UPA) with inter-antenna spacing of $\lambda/2$. The ``sparse UPA'' scheme adopts UPA with inter-antenna spacing of $4\lambda$, which guarantees that the antennas are uniformly distributed over the entire 2D rectangular region. For a fair comparison, all benchmark schemes employ the same number of antennas as that of the CL-MA array, while the receive combining and transmit power are designed based on instantaneous channels using the method in Section IV-A. In addition, the globally lower bound on the total transmit power is given by \eqref{eq_total_power_expect}.

\begin{figure}[t]
	\begin{center}
		\includegraphics[width=\figwidth cm]{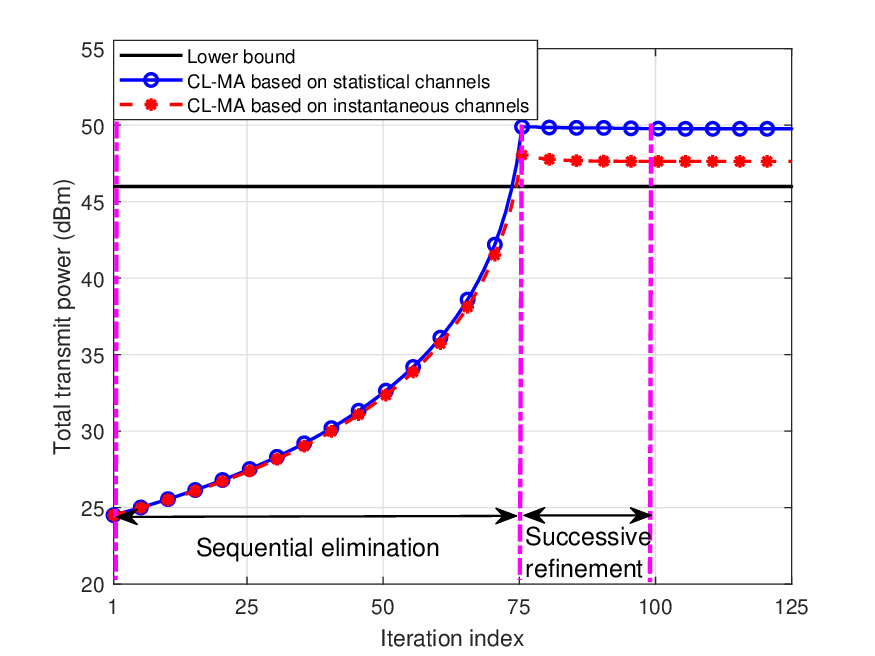}
		\caption{Convergence evaluation of the proposed Algorithm \ref{alg_APV}.}
		\label{Fig_Iteration}
	\end{center}
\end{figure}
First, we evaluate the convergence of the proposed Algorithm \ref{alg_APV} in Fig. \ref{Fig_Iteration}. As can be observed, the sequential elimination phase completes within $75$ iterations. In this phase, the total transmit power of the proposed CL-MA solutions increase with the iteration because the virtual antennas are removed for each iteration. Then, in the successive refinement phase, the total transmit power slightly decreases over the iterations because the adjustment of antenna positions. This phase spends about $25$ iterations, which demonstrates the quick convergence of the proposed algorithm. In addition, we can observe that the proposed CL-MA scheme based on instantaneous channels can closely approach the globally lower bound on the total transmit power, highlighting the superiority of the proposed solutions. Moreover, the performance gap between the statistical and instantaneous channel-based CL-MA schemes is about $2$ dB. However, the statistical channel-based CL-MA scheme circumvents the frequent movement of antennas as the users randomly move within the coverage area of the BS. The overhead and energy consumption associated with antenna movement are significantly reduced, which thus offers a promising and viable solution for implementing CL-MA array-enabled BSs.

\begin{figure}[t]
	\begin{center}
		\includegraphics[width=\figwidth cm]{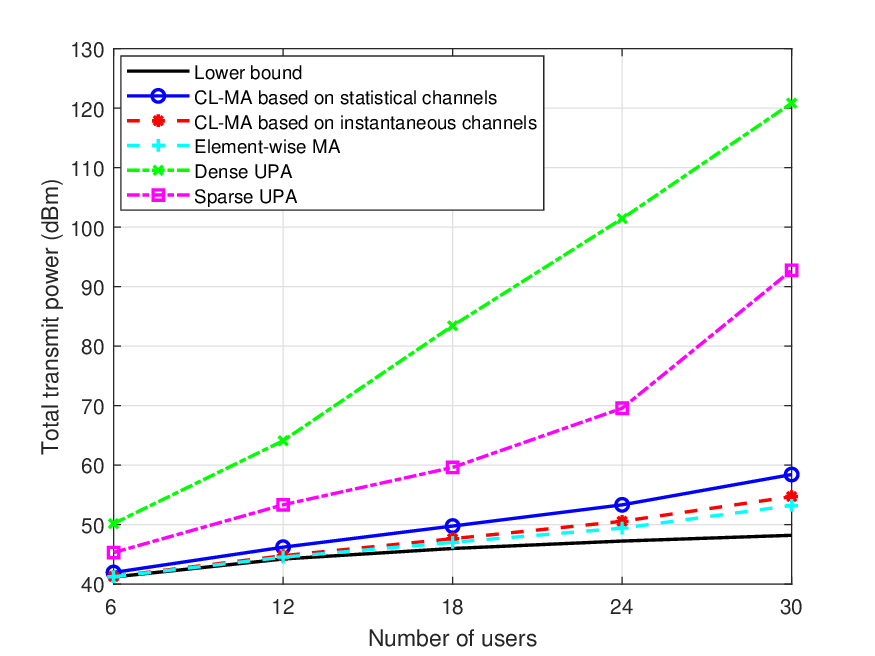}
		\caption{Comparison of total transmit power with the proposed and benchmark schemes versus the number of users.}
		\label{Fig_K}
	\end{center}
\end{figure}
Next, we compare the performance of the proposed and benchmark schemes. As shown in Fig. \ref{Fig_K}, the total transmit powers of different MA and FPA schemes all increase with the number of users. However, the performance of the MA schemes is much better than that of both FPA (i.e., dense and sparse UPA) schemes, especially for a larger number of users, e.g., over 30 dB power-saving for $K=30$. This is because MA arrays can efficiently decrease the channel correlation among users through antenna position optimization. The ZF-based receive combining will not cause a significant loss of the received signal power at the BS, thus yielding a total transmit power close to the globally lower bound. In contrast, due to the high correlation of multiple users' channels, the dense and sparse UPA schemes suffer from a severe loss in the desired signal power when nulling the interference. Besides, compared to the element-wise antenna movement, the proposed CL-MA architecture has a very small performance loss, with no larger than 1 dB increase in the transmit power. This demonstrates the superiority of the CL-MA array in reducing the hardware complexity yet maintaining high communication performance.

\begin{figure}[t]
	\begin{center}
		\includegraphics[width=\figwidth cm]{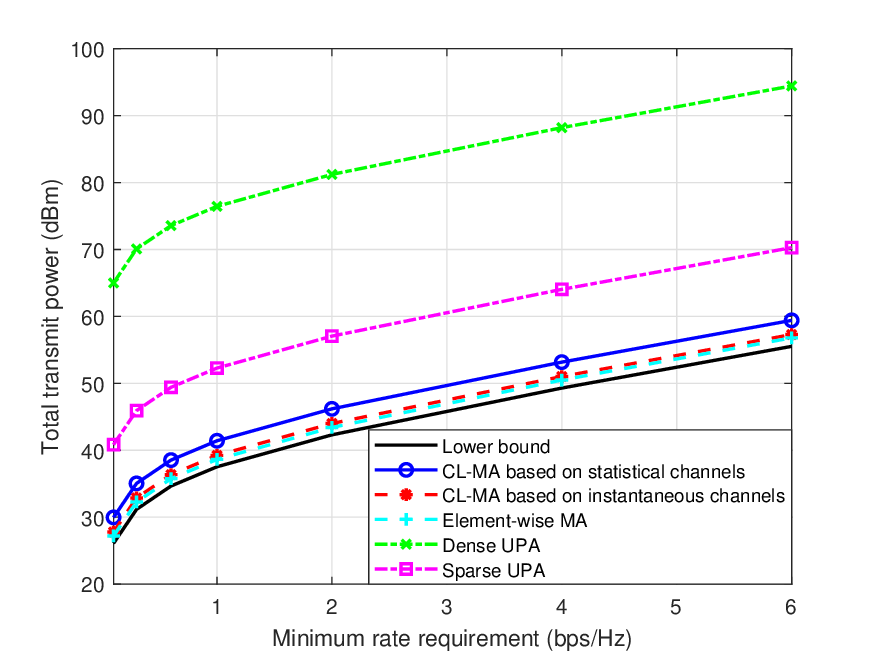}
		\caption{Comparison of total transmit power with the proposed and benchmark schemes versus the rate requirement of users.}
		\label{Fig_r}
	\end{center}
	\vspace{-12 pt}
\end{figure}

\begin{figure}[t]
	\begin{center}
		\includegraphics[width=\figwidth cm]{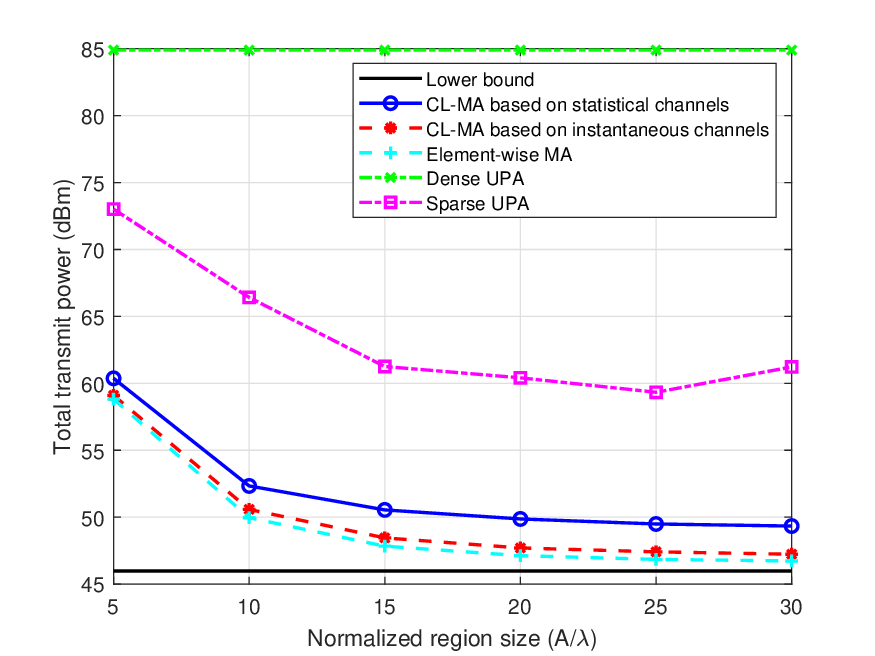}
		\caption{Comparison of total transmit power with the proposed and benchmark schemes versus the size of the antenna moving region.}
		\label{Fig_A}
	\end{center}
\end{figure}

\begin{figure}[t]
	\begin{center}
		\includegraphics[width=\figwidth cm]{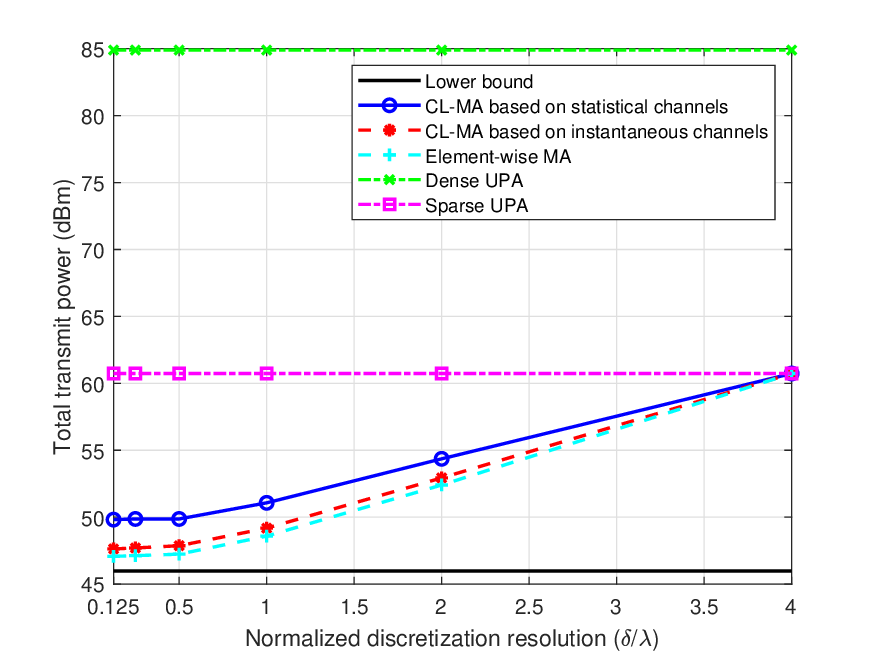}
		\caption{Comparison of total transmit power with the proposed and benchmark schemes versus the resolution of discretizing antenna moving region.}
		\label{Fig_sample}
	\end{center}
	\vspace{-12 pt}
\end{figure}

\begin{figure}[t]
	\begin{center}
		\includegraphics[width=\figwidth cm]{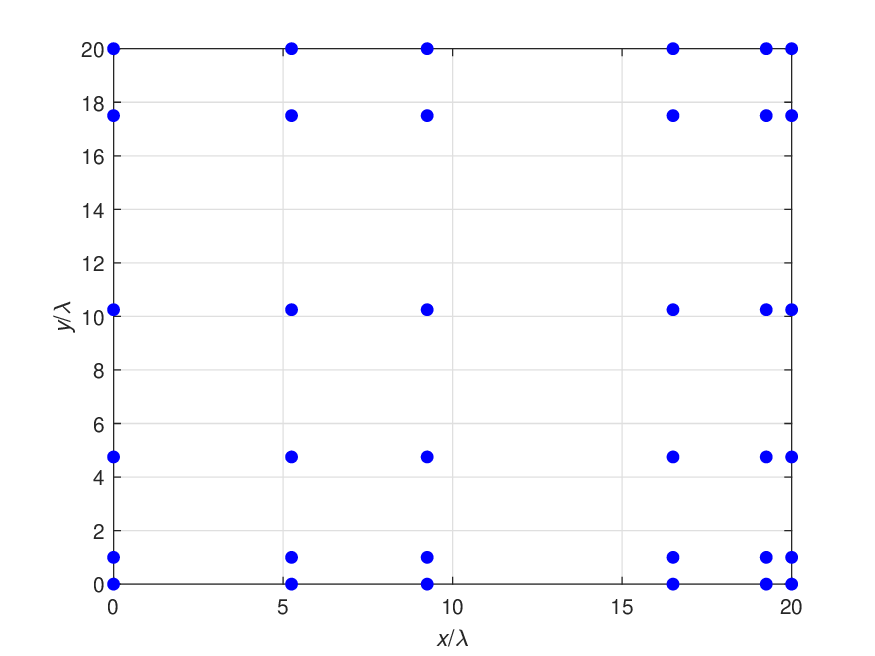}
		\caption{Illustration of the optimized geometry of the CL-MA array.}
		\label{Fig_Geo}
	\end{center}
\end{figure}
Fig. \ref{Fig_r} shows the total transmit power of users with varying rate requirements. It can be observed that the CL-MA scheme based on instantaneous channels can closely approach the lower bound on the total transmit power, where the performance gaps to the element-wise MA scheme and the lower bound are $0.5$ dB and $1.7$ dB, respectively. Due to the reduced DoFs in antenna movement, the CL-MA scheme based on statistical channels suffers from $2.2$ dB increase in the total transmit power compared to that based on instantaneous channels. However, it yields performance gains of $10.9$ dB and $35$ dB in saving power over the sparse UPA and dense UPA schemes, respectively. The results confirm again that the proposed CL-MA scheme can significantly reduce the hardware complexity and the statistical channel-based antenna position optimization can significantly reduce the movement overhead, while maintaining superior performance over conventional FPA systems.

\begin{figure*}[t]
	\centering
	\subfigure[CL-MA array]{\includegraphics[width=5.5 cm]{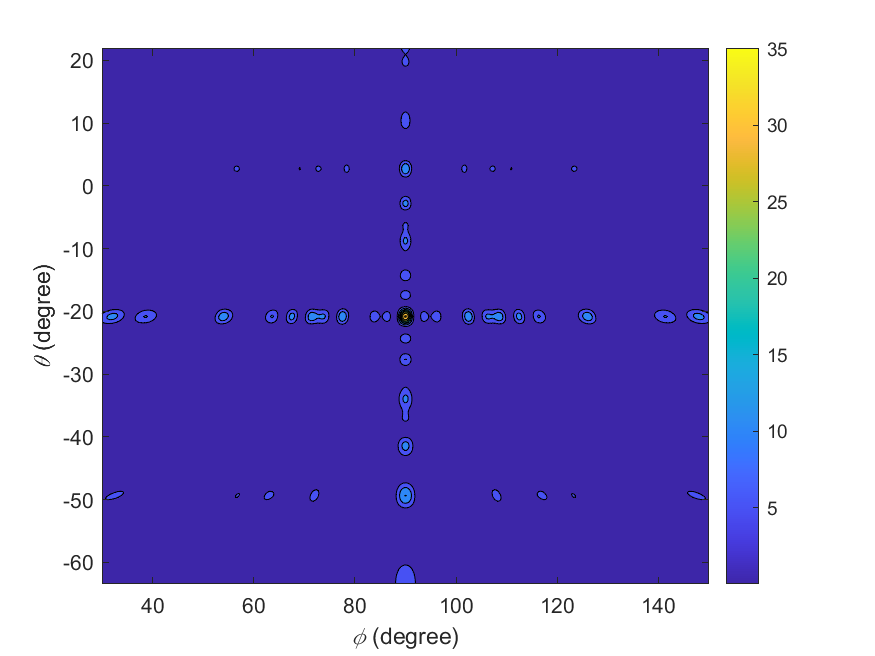} \label{Fig_Beam_MA}}
	\subfigure[Dense UPA]{\includegraphics[width=5.5 cm]{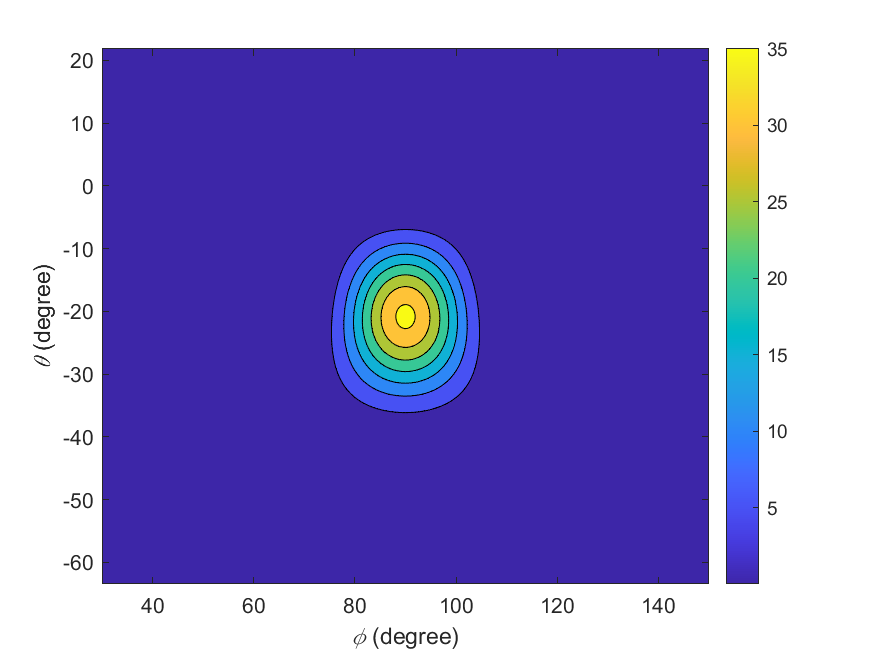} \label{Fig_Beam_dense}}
	\subfigure[Sparse UPA]{\includegraphics[width=5.5 cm]{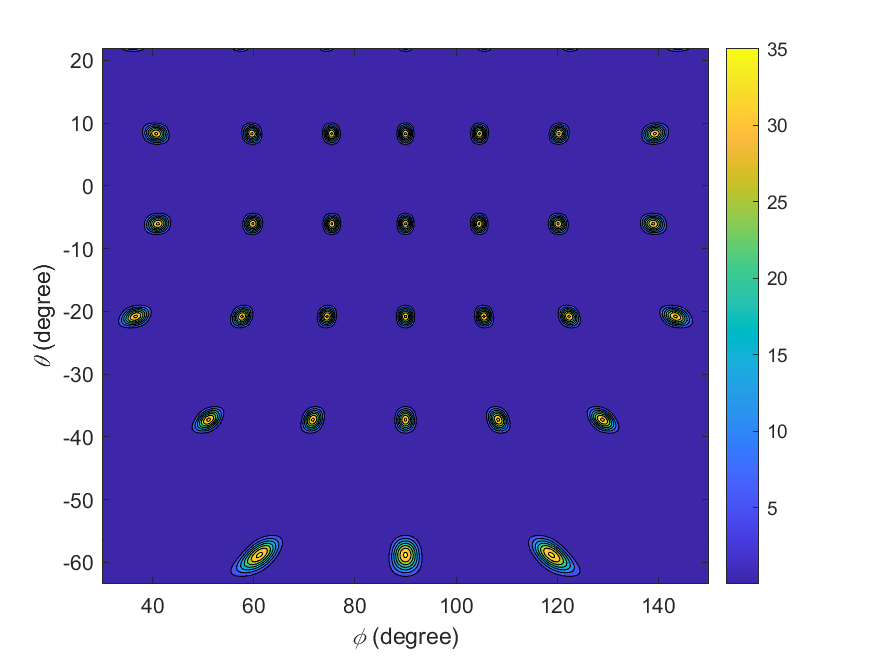} \label{Fig_Beam_sparse}}
	\caption{Beam patterns of different antenna arrays within the coverage sector of the BS.}
	\label{Fig_Beam}
	\vspace{-12 pt}
\end{figure*}

In Figs. \ref{Fig_A} and \ref{Fig_sample}, we evaluate the performance of our proposed CL-MA schemes versus the size and the discretization resolution of the antenna moving region, respectively. As the region size increases from $5\lambda$ to $20\lambda$, the MA systems achieve a rapid decrease in the total transmit power. For $A>20\lambda$, further enlarging the region size yields negligible improvements in system performance. This is because the channel spatial correlation of multiple users with optimized antenna positions is already sufficiently small for $A=20\lambda$. Besides, we can observe from Fig. \ref{Fig_sample} that the discretization resolution of the antenna moving region, i.e., $\delta$ in Algorithm \ref{alg_APV}, also significantly impacts the system performance. As $\delta$ decreases from $4\lambda$ to $0.5\lambda$, the total transmit power (in dB) decreases almost linearly. However, for $\delta<\lambda/2$, a higher discretization resolution does not contribute much to improving the system performance. This is because the beam pattern of an uniform array with inter-antenna spacing larger than half-wavelength has obstinate sidelobes \cite{TseFundaWC}, as shown in Fig. \ref{Fig_Beam_sparse}. This indicates that the users located within the sidelobes of each other have a high channel correlation, which deteriorates multiuser communication performance. The result in Fig. \ref{Fig_sample} suggests that a moderate discretization resolution, e.g., $\delta=\lambda/2$ or $\lambda/4$, suffices to achieve a near-optimal solution for the proposed algorithm of discrete antenna position optimization.

Finally, we show in Fig. \ref{Fig_Geo} the geometry of the CL-MA array optimized based on statistical channels. On one hand, the antennas are spread over the entire movable region, which maximizes the array aperture and minimizes the channel correlation of multiple users on average. On the other hand, the antennas of the CL-MA array are distributed in a non-uniform layout, which can mitigate the sidelobes of its beam pattern based on the users' locations. The beam patterns of the CL-MA array, dense UPA, and spare UPA are illustrated in Fig. \ref{Fig_Beam}, where the ranges of the elevation and azimuth angles are calculated according to the user distribution. It can be observed that the optimized CL-MA array has the narrowest main lobe and low sidelobes, which can efficiently decrease the channel correlation of users located within the coverage region of the BS. In comparison, the dense UPA has a wide main lobe, while the sparse UPA has numerous grating lobes. As a result, both schemes may lead to a high channel correlation between users, resulting in worse multiuser communication performance.

\section{Conclusions}
In this paper, we proposed a novel CL-MA architecture to reduce the hardware complexity associated with conventional element-wise MA systems. In particular, the CL-MA array was applied at BSs to enhance multiuser communication performance by jointly optimizing the APVs, receive combining matrix, and users’ transmit power to minimize the total uplink transmit power. We first analyzed the globally lower bound on the total transmit power and derived optimal solutions for the horizontal and vertical APVs in closed form under the condition of a single channel path for each user. For the general scenario with multiple channel paths of users, we developed a low-complexity algorithm employing a greedy approach for discrete antenna position optimization. Furthermore, to reduce antenna movement overhead, we proposed a statistical channel-based optimization strategy, enabling effective antenna position configuration without frequent repositioning under time-varying channel conditions. Simulation results validated the effectiveness of the proposed CL-MA schemes, demonstrating their ability to closely approach the theoretical power lower bound while significantly outperforming conventional FPA systems. Notably, the statistical channel-based antenna position optimization can achieve an appealing performance for CL-MA arrays with significantly reduced antenna movement overhead, thus offering a practically efficient solution for MA system implementation.

\appendices
\section{Proof of Theorem \ref{theo_bound}} \label{App_bound}
The SINR for user $k$ is lower-bounded by its received SNR by neglecting the interference in \eqref{eq_SINR}, i.e.,
\begin{equation}\label{eq_bound_SINR}
	\gamma_{k} \leq \bar{\gamma}_{k} = \frac{|\mathbf{w}_{k}^{\mathrm{H}}\mathbf{h}_{k}(\mathbf{x}, \mathbf{y})|^{2}p_{k}}{\|\mathbf{w}_{k}\|_{2}^{2}\sigma^{2}}.
\end{equation}
Substituting \eqref{eq_bound_SINR} into \eqref{eq_problem_ist_b}, we have
\begin{equation}\label{eq_bound_rate}
	\begin{aligned}
		&\log_{2}(1+\bar{\gamma}_{k}) \geq \log_{2}(1+\gamma_{k}) \geq r_{k}\\
		\Rightarrow &\frac{|\mathbf{w}_{k}^{\mathrm{H}}\mathbf{h}_{k}(\mathbf{x}, \mathbf{y})|^{2}p_{k}}{\|\mathbf{w}_{k}\|_{2}^{2}\sigma^{2}} \geq 2^{r_{k}}-1\\
		\Rightarrow & p_{k} \geq \frac{\|\mathbf{w}_{k}\|_{2}^{2} \sigma^{2}(2^{r_{k}-1})}{|\mathbf{w}_{k}^{\mathrm{H}}\mathbf{h}_{k}(\mathbf{x}, \mathbf{y})|^{2}},~1 \leq k \leq K.
	\end{aligned}
\end{equation}
Notice that $\frac{\|\mathbf{w}_{k}\|_{2}^{2}}{|\mathbf{w}_{k}^{\mathrm{H}}\mathbf{h}_{k}(\mathbf{x}, \mathbf{y})|^{2}}$ is minimized if the maximal ratio combining (MRC) is adopted, i.e., $\mathbf{w}_{k} = \mathbf{h}_{k}(\mathbf{x}, \mathbf{y})$. Thus, we have 
\begin{equation}\label{eq_bound_MRC}
	\frac{\|\mathbf{w}_{k}\|_{2}^{2} \sigma^{2}(2^{r_{k}-1})}{|\mathbf{w}_{k}^{\mathrm{H}}\mathbf{h}_{k}(\mathbf{x}, \mathbf{y})|^{2}} 
	\geq \frac{\sigma^{2}(2^{r_{k}-1})}{\|\mathbf{h}_{k}(\mathbf{x}, \mathbf{y})\|_{2}^{2}}.
\end{equation}
Furthermore, according to \eqref{eq_channel}, if each row in $\mathbf{F}_{k}^{\mathrm{hor}}(\mathbf{x})^{\mathrm{H}} \odot \mathbf{F}_{k}^{\mathrm{ver}}(\mathbf{y})^{\mathrm{H}}$ and $\mathbf{b}_{k}$ have aligned phases, an upper bound on the channel power gain for user $k$ is obtained as
\begin{equation}\label{eq_bound_channel_power}
	\begin{aligned}
		&\|\mathbf{h}_{k}(\mathbf{x}, \mathbf{y})\|_{2}^{2} = \|\mathbf{F}_{k}^{\mathrm{hor}}(\mathbf{x})^{\mathrm{H}} \odot \mathbf{F}_{k}^{\mathrm{ver}}(\mathbf{y})^{\mathrm{H}} \mathbf{b}_{k}\|_{2}^{2}\\
		\leq &\left\|\mathbf{1}_{MN} \|\mathbf{b}_{k}\|_{1}\right\|_{2}^{2} = MN\|\mathbf{b}_{k}\|_{1}^{2}.
	\end{aligned}
\end{equation}
Combining \eqref{eq_bound_rate}, \eqref{eq_bound_MRC}, and \eqref{eq_bound_channel_power}, we have $p_{k} \geq \bar{p}_{k}$. This thus completes the proof.

\section{Proof of Theorem \ref{theo_bound_condition}} \label{App_bound_condition}
Note that the CVO and MCP conditions on the channel vectors are sufficient to guarantee achieving the lower bound in \eqref{eq_bound_power}. For the case of a single channel path for each user shown in \eqref{eq_channel_LoS}, the MCP condition is naturally satisfied for all users, i.e., $\|\hat{\mathbf{h}}_{k}(\mathbf{x}, \mathbf{y})\|_{2}^{2} =  MN|b_{k}|^{2}$ always holds. Thus, in the following, we only need to prove that the CVO condition can always be satisfied for $K(K-1)/2 \leq I_{M}+I_{N}$. 

Denote $\mathcal{P}=\{(k,q) \arrowvert 1 \leq k < q \leq K\}$ as the set of all unordered user pairs. Next, we define $\mathcal{P}_{1}$ and $\mathcal{P}_{2}$ as two complementary sets w.r.t. the universal set $\mathcal{P}$, i.e., $\mathcal{P}_{1} \cup \mathcal{P}_{2} = \mathcal{P}$ and $\mathcal{P}_{1} \cap \mathcal{P}_{2} = \varPhi$, and thus we have $|\mathcal{P}_{1}|+|\mathcal{P}_{2}|=|\mathcal{P}|=K(K-1)/2$. If $K(K-1)/2 \leq I_{M}+I_{N}$ holds, we can always find two complementary sets satisfying $|\mathcal{P}_{1}| \leq I_{M}$ and $|\mathcal{P}_{2}| \leq I_{N}$. For example, we can choose $I_{M}$ elements in $\mathcal{P}$ to construct $\mathcal{P}_{1}$ and the other $K(K-1)/2 - I_{M}$ elements to construct $\mathcal{P}_{2}$.

It has been proved in \cite[Theorem 1]{zhu2023MAarray} that for a linear MA array of size $M$, the \emph{steering vector orthogonality (SVO) condition} can be satisfied for any $I$ ($\leq I_{M}$) pairs of users with different steering angles by optimizing the APV. In other words, for $|\mathcal{P}_{1}| \leq I_{M}$, there always exists an APV, $\mathbf{x}^{\star}$, satisfying $\mathbf{a}_{k}^{\mathrm{hor}}(\mathbf{x}^{\star})^{\mathrm{H}}\mathbf{a}_{q}^{\mathrm{hor}}(\mathbf{x}^{\star})=0$, $\forall (k,q) \in \mathcal{P}_{1}$. Similarly, for $|\mathcal{P}_{2}| \leq I_{N}$, there always exists an APV, $\mathbf{y}^{\star}$, satisfying $\mathbf{a}_{k}^{\mathrm{ver}}(\mathbf{y}^{\star})^{\mathrm{H}}\mathbf{a}_{q}^{\mathrm{ver}}(\mathbf{y}^{\star})=0$, $\forall (k,q) \in \mathcal{P}_{2}$. 
According to the channel vector in \eqref{eq_channel_LoS}, we have
\begin{equation}\label{eq_CVO}
	\begin{aligned}
		&\hat{\mathbf{h}}_{k}(\mathbf{x}^{\star}, \mathbf{y}^{\star})^{\mathrm{H}}\hat{\mathbf{h}}_{q}(\mathbf{x}^{\star}, \mathbf{y}^{\star}) \\
		= & b_{k}^{*}b_{q} \left(\mathbf{a}_{k}^{\mathrm{hor}}(\mathbf{x}^{\star}) \otimes \mathbf{a}_{k}^{\mathrm{ver}}(\mathbf{y}^{\star})\right)^{\mathrm{H}}\left(\mathbf{a}_{q}^{\mathrm{hor}}(\mathbf{x}^{\star}) \otimes \mathbf{a}_{q}^{\mathrm{ver}}(\mathbf{y}^{\star})\right)\\
		= & b_{k}^{*}b_{q} \mathbf{a}_{k}^{\mathrm{hor}}(\mathbf{x}^{\star})^{\mathrm{H}}\mathbf{a}_{q}^{\mathrm{hor}}(\mathbf{x}^{\star}) \times \mathbf{a}_{k}^{\mathrm{ver}}(\mathbf{y}^{\star})^{\mathrm{H}}\mathbf{a}_{q}^{\mathrm{ver}}(\mathbf{y}^{\star}),
	\end{aligned}
\end{equation}
which is always equal to zero because either $\mathbf{a}_{k}^{\mathrm{hor}}(\mathbf{x}^{\star})^{\mathrm{H}}\mathbf{a}_{q}^{\mathrm{hor}}(\mathbf{x}^{\star})=0$ or $\mathbf{a}_{k}^{\mathrm{ver}}(\mathbf{y}^{\star})^{\mathrm{H}}\mathbf{a}_{q}^{\mathrm{ver}}(\mathbf{y}^{\star})=0$ holds for any $(k,q) \in \mathcal{P}=\mathcal{P}_{1} \cup \mathcal{P}_{2}$. This thus completes the proof.

\section{Proof of Theorem \ref{theo_APV}} \label{App_APV}
To prove Theorem \ref{theo_APV}, we only need to verify that $\mathbf{x}^{\star}$ and $\mathbf{y}^{\star}$ can always guarantee the CVO condition, i.e., $\hat{\mathbf{h}}_{k}(\mathbf{x}^{\star}, \mathbf{y}^{\star})^{\mathrm{H}}\hat{\mathbf{h}}_{q}(\mathbf{x}^{\star}, \mathbf{y}^{\star})=0$, $\forall (k,q) \in \mathcal{P}=\mathcal{P}_{1} \cup \mathcal{P}_{2}$. For ease of exposition, we denote the $m$-th entry of $\mathbf{x}^{\star}$ as $x[m]$. According to \eqref{eq_dis_x}, we have 
\begin{equation}\label{eq_SVO_i}
	\begin{aligned}
		&\sum \limits_{u_{i}=0}^{m_{i}-1} \e^{\jj\frac{2\pi}{\lambda}u_{i} [\mathbf{d}_{x}]_{i}(\vartheta_{k_{1}^{(i)}}-\vartheta_{q_{1}^{(i)}})}\\
		=&\sum \limits_{u_{i}=0}^{m_{i}-1} \e^{\pm \jj2\pi u_{i} (\rho_{i}+1/m_{i})} = \sum \limits_{u_{i}=0}^{m_{i}-1} \e^{\pm \jj2\pi u_{i}/m_{i}} = 0.\\
	\end{aligned}
\end{equation}
For any $(k,q) \in \mathcal{P}_{1}$, denoting it as the $i$-th element without loss of generality, i.e., $(k_{1}^{(i)},q_{1}^{(i)})$, $1 \leq i \leq |\mathcal{P}_{1}|$, we thus have
\begin{equation}\label{eq_SVO_hor}
	\begin{aligned}
		&\mathbf{a}_{k_{1}^{(i)}}^{\mathrm{hor}}(\mathbf{x}^{\star})^{\mathrm{H}}\mathbf{a}_{q_{1}^{(i)}}^{\mathrm{hor}}(\mathbf{x}^{\star})
		=\sum \limits_{m=1}^{M} \e^{\jj\frac{2\pi}{\lambda}x[m](\vartheta_{k_{1}^{(i)}}-\vartheta_{q_{1}^{(i)}})}\\
		\overset{\text{(a)}}{=}&\sum \limits _{u_{I_{M}}=0}^{m_{I_{M}}-1}  \cdots \sum \limits_{u_{i+1}=0}^{m_{i+1}-1}  
		\sum \limits_{u_{i}=0}^{m_{i}-1} \sum \limits_{\hat{m}=1}^{M_{i}} \\
		&~~~~\e^{\jj\frac{2\pi}{\lambda}x[\hat{m}+u_{i} M_{i}+u_{i+1} M_{i+1}+\cdots+u_{I_{M}} M_{I_{M}}](\vartheta_{k_{1}^{(i)}}-\vartheta_{q_{1}^{(i)}})}\\
		\overset{\text{(b)}}{=}&\sum \limits _{u_{I_{M}}=0}^{m_{I_{M}}-1}  \cdots \sum \limits_{u_{i+1}=0}^{m_{i+1}-1}  
		\sum \limits_{u_{i}=0}^{m_{i}-1} \sum \limits_{\hat{m}=1}^{M_{i}} \\
		&~~~~\e^{\jj\frac{2\pi}{\lambda}\left(x[\hat{m}]+[\mathbf{d}_{x}]_{i}+[\mathbf{d}_{x}]_{i+1}+\cdots+[\mathbf{d}_{x}]_{I_{M}}\right)(\vartheta_{k_{1}^{(i)}}-\vartheta_{q_{1}^{(i)}})}\\
		\overset{\text{(c)}}{=}&\sum \limits _{u_{I_{M}}=0}^{m_{I_{M}}-1} \e^{\jj\frac{2\pi}{\lambda}u_{I_{M}} [\mathbf{d}_{x}]_{I_{M}}(\vartheta_{k_{1}^{(i)}}-\vartheta_{q_{1}^{(i)}})} \times \cdots \\
		&\times \sum \limits_{u_{i+1}=0}^{m_{i+1}-1}  \e^{\jj\frac{2\pi}{\lambda}u_{i+1} [\mathbf{d}_{x}]_{i+1}(\vartheta_{k_{1}^{(i)}}-\vartheta_{q_{1}^{(i)}})}\\
		&\times \sum \limits_{u_{i}=0}^{m_{i}-1} \e^{\jj\frac{2\pi}{\lambda}u_{i} [\mathbf{d}_{x}]_{i}(\vartheta_{k_{1}^{(i)}}-\vartheta_{q_{1}^{(i)}})}\\
		&\times\sum \limits_{\hat{m}=1}^{M_{i}}
		\e^{\jj\frac{2\pi}{\lambda}x[\hat{m}] (\vartheta_{k_{1}^{(i)}}-\vartheta_{q_{1}^{(i)}})}\overset{\text{(d)}}{=}0,
	\end{aligned}
\end{equation}
where equality (a) holds due to one-to-one correspondence of $m=\mathbf{u}_{m}^{\mathrm{T}}\mathbf{m}+1 = \hat{m} + u_{i} M_{i}+u_{i+1} M_{i+1}+\cdots+u_{I_{M}} M_{I_{M}}$ for $1 \leq m \leq M$ and $1 \leq \hat{m} \leq M_{i}$; equality (b) holds because according to the structure of the optimal APVs in \eqref{eq_APV}, $x[\hat{m}+u_{i} M_{i}] = x[\hat{m}] + u_{i} [\mathbf{d}_{x}]_{i}$ always holds for $1 \leq \hat{m} \leq M_{i}$, $0 \leq u_{i} \leq m_{i}-1$, and $1 \leq i \leq I_{M}$; equality (c) holds by expanding the exponential term $\left(x[\hat{m}]+[\mathbf{d}_{x}]_{i}+[\mathbf{d}_{x}]_{i+1}+\cdots+[\mathbf{d}_{x}]_{I_{M}}\right)$; and equality (d) holds because of \eqref{eq_SVO_i}.
Similarly, we can also prove that $\mathbf{a}_{k}^{\mathrm{ver}}(\mathbf{y}^{\star})^{\mathrm{H}}\mathbf{a}_{q}^{\mathrm{ver}}(\mathbf{y}^{\star})=0$ always holds for any $(k,q) \in \mathcal{P}_{2}$. In other words, for any $(k,q) \in \mathcal{P}=\mathcal{P}_{1} \cup \mathcal{P}_{2}$, either $\mathbf{a}_{k}^{\mathrm{hor}}(\mathbf{x}^{\star})^{\mathrm{H}}\mathbf{a}_{q}^{\mathrm{hor}}(\mathbf{x}^{\star})=0$ or $\mathbf{a}_{k}^{\mathrm{ver}}(\mathbf{y}^{\star})^{\mathrm{H}}\mathbf{a}_{q}^{\mathrm{ver}}(\mathbf{y}^{\star})=0$ holds. According to \eqref{eq_CVO}, we thus have $\hat{\mathbf{h}}_{k}(\mathbf{x}^{\star}, \mathbf{y}^{\star})^{\mathrm{H}}\hat{\mathbf{h}}_{q}(\mathbf{x}^{\star}, \mathbf{y}^{\star})=0$, $\forall (k,q) \in \mathcal{P}=\mathcal{P}_{1} \cup \mathcal{P}_{2}$. 

In addition, we also need to verify the inter-antenna spacing constraints \eqref{eq_problem_ist_e} and \eqref{eq_problem_ist_g}. According to \eqref{eq_APV}, we have
\begin{equation}\label{eq_inter_spacing}
	\begin{aligned}
		&x[m]-x[m-1]=(\mathbf{u}_{m} - \mathbf{u}_{m-1})^{\mathrm{T}}\mathbf{d}_{x},~2 \leq m \leq M.
	\end{aligned}
\end{equation}
There are only two possible cases of $\mathbf{u}_{m} - \mathbf{u}_{m-1}$: it is equal to $[1,0,\dots,0]$ or $[1-m_{1},\dots,1-m_{i},1,0,\dots,0]$, $1 \leq i \leq I_{M}$. For the former case, it is easy to verify $(\mathbf{u}_{m} - \mathbf{u}_{m-1})^{\mathrm{T}}\mathbf{d}_{x} = [\mathbf{d}_{x}]_{1} \geq d_{\min,x}$. For the latter case, it is easy to verify $(\mathbf{u}_{m} - \mathbf{u}_{m-1})^{\mathrm{T}}\mathbf{d}_{x} = [\mathbf{d}_{x}]_{i}-\sum_{j=1}^{i-1}(n_{j}-1)[\mathbf{d}_{x}]_{j}+d_{\min,x} \geq d_{\min,x}$. For both cases, the inter-antenna spacing over the horizontal direction is always no less that $d_{\min,x}$. Similarly, we can also prove that the inter-antenna spacing over the vertical direction is always no less that $d_{\min,y}$.	
This thus completes the proof.

\bibliographystyle{IEEEtran} % use IEEEtran.bst style
\bibliography{IEEEabrv,ref_zhu}

\end{document}